\theoremstyle{remark}
\newtheorem{theorem}{Theorem}
\newtheorem{lemma}{Lemma}
\newtheorem{prop}{Proposition}
\def\underbracex#1#2{\mathop{\vtop{\m@th\ialign{##\crcr
        $\hfil\displaystyle{#2}\hfil$\crcr
        \noalign{\kern3\p@\nointerlineskip}%
        #1\crcr\noalign{\kern3\p@}}}}\limits}
\def\upbracefilla{$\m@th \setbox\z@\hbox{$\braceld$}%
\bracelu\leaders\vrule \@height\ht\z@ \@depth\z@\hfill 
\kern\p@\vrule \@width\p@\kern\p@\vrule \@width\p@\kern\p@\vrule \@width\p@
$}
\def\upbracefillb{$\m@th \setbox\z@\hbox{$\braceld$}%
\vrule \@width\p@\kern\p@\vrule \@width\p@\kern\p@\vrule \@width\p@\kern\p@
\leaders\vrule \@height\ht\z@ \@depth\z@\hfill\bracerd
\braceld\leaders\vrule \@height\ht\z@ \@depth\z@\hfill
\kern\p@\vrule \@width\p@\kern\p@\vrule \@width\p@\kern\p@\vrule \@width\p@
$}
\def\upbracefillc{$\m@th \setbox\z@\hbox{$\braceld$}%
\vrule \@width\p@\kern\p@\vrule \@width\p@\kern\p@\vrule \@width\p@\kern\p@
\leaders\vrule \@height\ht\z@ \@depth\z@\hfill
\kern\p@\vrule \@width\p@\kern\p@\vrule \@width\p@\kern\p@\vrule \@width\p@
$}
\def\upbracefilld{$\m@th \setbox\z@\hbox{$\braceld$}%
\vrule \@width\p@\kern\p@\vrule \@width\p@\kern\p@\vrule \@width\p@\kern\p@
\leaders\vrule \@height\ht\z@ \@depth\z@\hfill\braceru$}
\def\upbracefillbd{$\m@th \setbox\z@\hbox{$\braceld$}%
\vrule \@width\p@\kern\p@\vrule \@width\p@\kern\p@\vrule \@width\p@\kern\p@
\bracerd\braceld
\leaders\vrule \@height\ht\z@ \@depth\z@\hfill\braceru$}
\begin{document}

\title{Hybrid Oscillation Damping and Inertia Management for Distributed Energy Resources}

\author{Cheng Feng,~\IEEEmembership{Member,~IEEE,}
Linbin Huang,~\IEEEmembership{Member,~IEEE,}
Xiuqiang He,~\IEEEmembership{Member,~IEEE,}\\
Yi Wang,~\IEEEmembership{Member,~IEEE,}	
Florian Dörfler,~\IEEEmembership{Senior Member,~IEEE,}
Chongqing Kang,~\IEEEmembership{Fellow,~IEEE}.

\thanks{
Manuscript received 28 August 2024; revised 10 Feb 2025; accepted 16 Apr 2025. This work is supported by the National Key Research and Development Program of China under Grant 2021YFB2401200 and the National Natural and Science Foundation of China under Grant 52321004. Paper no. TPWRS-01440-2024. (\textit{Corresponding author: Chongqing Kang}.)

Cheng Feng and Chongqing Kang are with the Department of Electrical Engineering, Tsinghua University, Beijing 100084, China (email: fengc.2019@tsinghua.org.cn; cqkang@tsinghua.edu.cn). 

Linbin Huang is with the College of Electrical Engineering, Zhejiang University, Hangzhou 310027, China (email: hlinbin@zju.edu.cn). 

Xiuqiang He is with the Department of Automation, Tsinghua University, Beijing 100084, China (email: hxq19@tsinghua.org.cn). 

Yi Wang is with the Department of Electrical and Electronic Engineering, The University of Hong Kong, Hong Kong SAR, China (email: yiwang@eee.hku.hk). 

Florian Dörfler is with Automatic Control Laboratory, ETH Zurich, 8092 Zurich, Switzerland (email: dorfler@ethz.ch). }}

\markboth{ }
{Shell \MakeLowercase{\textit{et al.}}: Bare Demo of IEEEtran.cls for IEEE Journals}
\maketitle

\begin{abstract}
Power systems dominated by converter-interfaced distributed energy resources (DERs) typically exhibit weaker damping capabilities and lower inertia, compromising system stability. Although individual DER controllers are evolving to provide superior oscillation damping capabilities and inertia supports, there is a lack of network-wide coordinated management measures for multiple DERs, potentially leading to unexpected instability and cost-effectiveness problems. To address this gap, this paper introduces a hybrid oscillation damping and inertia management strategy for multiple DERs, considering network coupling effects, and seeks to encourage DERs to provide enhanced damping and inertia with appropriate economic incentives. We first formulate an optimization problem to tune and allocate damping and inertia coefficients for DERs, minimizing associated power and energy costs while ensuring hard constraints for system frequency stability and small-signal stability. The problem is built upon a novel convex parametric formulation that integrates oscillation mode location and frequency trajectory requirements, equipped with a theoretical guarantee, and eliminating the need for iterative tuning and computation burdens. Furthermore, to increase the willingness of DERs to cooperate, we further design appropriate economic incentives to compensate for DERs' costs based on the proposed cost minimization problem, and assess its impact on system cost-efficiency. Numerical tests highlight the effectiveness of the proposed method in promoting system stability and offer insights into potential economic benefits.
\end{abstract}

\begin{IEEEkeywords}
Power system stability, distributed energy resources, oscillation damping, virtual inertia, ancillary service.
\end{IEEEkeywords}

\section{Introduction}
The power system is currently undergoing a profound transformation, with an increasing share of distributed energy resources (DERs) connecting to the grid via power electronic converters. This shift has generally resulted in lower inertia and weakened damping~\cite{Markovic2021,F.F.-1}. To react with a faster frequency response and to dampen an oscillation component swiftly, various studies aimed to design new grid-forming control structures with advanced inertia support and damping capabilities for \textit{individual} DERs~\cite{A.D.-5}.

Meanwhile, there is a lack of \textit{network-wide} management strategies for multiple DERs to provide oscillation damping and inertia, leading to instability and cost-effectiveness issues. On one hand, DERs' control parameters are often tuned individually, and network-wide interactions are overlooked. This oversight can result in unexpected stability issues when different DERs are connected via power networks~\cite{XiongLiu-16}. On the other hand, to provide oscillation damping and inertia support capabilities, DERs need to deload from the maximum available power point or use energy storage, incurring additional costs. Uncoordinated and ad hoc management of DERs' oscillation damping and inertia increases system costs~\cite{dreidy2017inertia}. Moreover, current system security management relies on the compulsory reliability criterion without compensations. These minimum-requirement reliability tests for DERs risk a `tragedy of the commons' scenario: where each DER adheres only to the basic stability requirements, cumulatively edging the entire system towards potential instability~\cite{ChengFan-17}. 
\subsection{Literature Review}
Conventionally, generation unit controller parameters are tuned individually by placing the system mode, using either static~\cite{chow1989pole} or adaptive controllers~\cite{cespedes2014adaptive}. It was generally assumed that these generation units were connected to an infinite bus through network inductance during tuning. However, this idealized model may not accurately reflect multiple DERs' interactions. As seen, unexpected instability issues will arise when DERs are connected to the grid~\cite{cheng2022real}. Impedance-based methods have been widely used to analyze the coupling effects among DERs~\cite{yang2021siso}. Ref.\cite{C.L.-22} suggested that different placement strategies of grid-forming DERs at different network nodes will affect the system dynamic performance. From the system operator's perspective, it is challenging to determine the exact impedance parameter for each DER. A more practical solution is to coordinately tune the most critical parameters: the oscillation damping and inertia coefficients~\cite{wang2022analytical}.

Network-wide stability-oriented unit management strategies have been a critical concern since the era of systems dominated by synchronous machines~\cite{ChenPan-18}. Various studies aimed to determine the limits of transient transmission power~\cite{GanThomas-21,XuDong-19}, or iteratively re-dispatch synchronous generators to maintain system stability~\cite{R.F.-15}. These studies explored ways to include transient stability~\cite{GanThomas-21,XuDong-19} and voltage stability constraints~\cite{CuiSun-20} into synchronous unit commitment models. The integration of DERs has further sparked research on this topic. For DERs, control parameters can be directly tuned to improve system stability. During real-time operations, Ref.~\cite{A.U.-26,L.Z.-27} explored the direct adjustment of the damping coefficients of DERs to promote system response. However, current network-wide converter tuning strategies are often built upon non-convex optimization models. They employed sensitivity factors of system modes to iteratively adjust the modal damping ratio, leading to significant computational complexity. Some research has focused on analytically calculating oscillation modes~\cite{GuoZhao-28,jiang2020dynamic} or time-domain responses~\cite{SajadiKenyon-30,M.X.-29}. These works assumed that all units have equal inertia-to-damping ratios, whereas the ratios can vary in real systems. 

Another important consideration of network-wide oscillation damping and inertia management for DERs is the associated economic incentives and costs. Minimizing related costs while encouraging DERs to actively provide superior dynamic performances is the key to a cost-efficient and secure DER integration. New grid codes have already been established to economically incentivize DERs to actively contribute to enhanced dynamic system performance beyond basic requirements. Notable examples include the inertia markets established by the UK National Grid \cite{UK} and the Australian Energy Market Operator (AEMO) \cite{AEMC}. Most related studies focused on the perspective of frequency stability and inertia. Ref.\cite{L.F.-23,Z.R.-24} investigated the cost-efficient provision of inertia and fast frequency reserves from DERs to improve transient performance. Ref.\cite{B.D.-3} further highlighted the strategic placement of inertia across DERs to enhance key dynamic performance metrics. Meanwhile, some work suggests that, in systems with a high penetration of DERs, small-signal stability is of significant concern~\cite{Markovic2021}, and damping plays a more pivotal role than inertia \cite{SajadiKenyon-30,li2023wholesystem}. 
\subsection{Contribution and Structure}
In summary, this paper proposes a \textit{hybrid oscillation damping and inertia management strategy} for DERs from the \textit{network-wide} perspective. The contributions are as follows:
\begin{itemize}[leftmargin=*]
\item \textit{Management problem formulation}: Develop an optimization problem for hybrid oscillation damping and inertia management for DERs. The problem minimizes related costs to place damping and inertia throughout the network, ensuring both frequency stability and small-signal stability while achieving cost-efficiency. 
\item \textit{Convex solution method}: Formulate convex constraints to ensure satisfactory oscillation damping and fast frequency response performance considering network effects. These constraints eliminate the need for iterative adjustments among DERs during tuning and rely on minimal assumptions. Uncertain load conditions and system parameters can also be integrated into the constraints using robust optimization techniques, while preserving convexity.
\item \textit{Practical incentive design}: Design economic incentives for hybrid oscillation damping and inertia managements through ancillary service compensations. Numerical results demonstrate the effectiveness of the proposed strategy in ensuring system stability and improving cost-effectiveness.
\end{itemize}

In Section \ref{optimization}, the management framework is explained and the models for different types of generation units are provided. The section also elaborates on the convex stability constraints related to damping and virtual inertia. In Section \ref{VCG}, we analyze the stability-related costs of DERs, and present the complete cost minimization model to allocate damping and inertia. This section also details the incorporation of economic incentives via ancillary services. Section \ref{case} offers a numerical validation of the proposed service and its economic implications. Conclusions are provided in Section \ref{conclusions}.

\textit{Notation:} Vectors (column vectors by default) are represented in bold italics. Matrices are represented in bold upright font. $\mathbf{I}$ indicates the identity matrix and $\mathbf{1}$ denotes an all-one vector. Transpose is represented by $\boldsymbol{x}^{\top}$. Conjugate is represented by $\boldsymbol{x}^{*}$. Conjugate transpose is represented by $\boldsymbol{x}^{\mathrm{H}}$. The inner product $\boldsymbol{x}$ and $\boldsymbol{y}$ is denoted as $\left< \boldsymbol{x}, \boldsymbol{y} \right>=\boldsymbol{x}^{\mathrm{H}}\boldsymbol{y}$. $h(s)$ refers to the Laplace domain representation of a function $h(t)$, while the latter refers to its time domain representation. $\left\| x(t) \right\| _{\infty}=\max_t |x(t)|$ represents the $\mathcal{L}_{\infty}$ norm. $\mathrm{col}(\boldsymbol{x}_1,\boldsymbol{x}_2,..,)$ denotes the stacked column vector formed by $\boldsymbol{x}_1,\boldsymbol{x}_2,...$. $\nabla$ is the gradient operator.  

\section{System Model and Stability Constraints}\label{optimization}
\subsection{Setting and System Model}
\begin{figure}[t]
\centering
\includegraphics[width=0.49\textwidth]{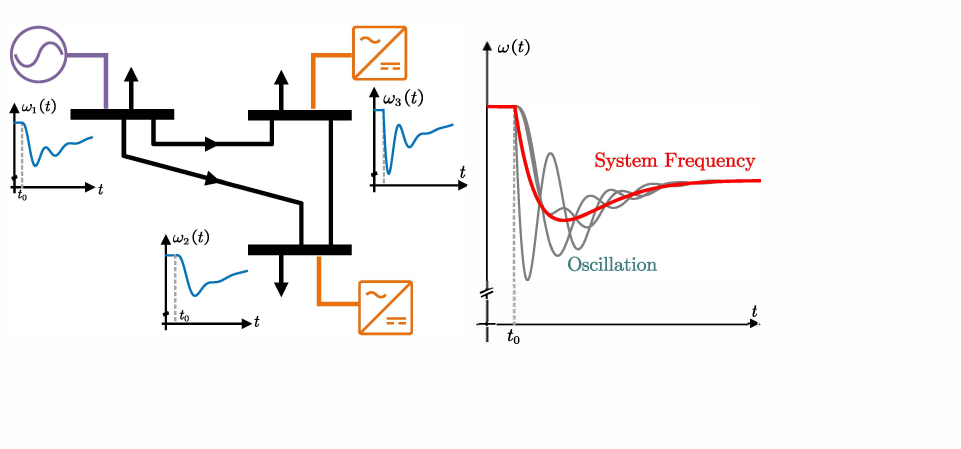}
\caption{The hybrid oscillation damping and inertia provision management aims to mitigate oscillations and improve poor frequency performances via tune and allocate nodal damping $d$ and virtual inertia $m$ from DERs.}
\label{fig:intro}
\end{figure}
The primary goal of the power system operator is to maintain system stability with minimum costs, ensuring that, even after significant disturbances, transient response performance remains within acceptable limits. It is crucial to ensure that the damping and inertia levels at different nodes can effectively mitigate the adverse effects of disturbances, including inter-area oscillations and poor frequency response performance. 

Consider the system as depicted in Fig.\ref{fig:intro}, where the generation units (including the DERs and the synchronous generators) are connected via a lossless transmission grid. In this setup, the buses are represented by the set $\mathcal{N}=\{1,..,i,..,I\}$. The set of buses $\mathcal{N}$ is partitioned into $\mathcal{N}_{\mathrm{G}}=\{1,...,I_1\}$ and $\mathcal{N}_{\mathrm{L}}=\{I_1+1,...,I\}$. $\mathcal{N}_{\mathrm{G}}$ denotes the set of buses connected to generation units (if any), while $\mathcal{N}_{\mathrm{L}}$ refers to those connected exclusively to loads. We focus on the dynamics of the voltage phasor angle and frequency for each bus $i$. The unified algebraic and differential equations describing their dynamics are~\cite{kundur2007power,Peter}: 
\begin{align}
\dot{\theta}_i=&~\omega _i,
\\
m_i\dot{\omega}_i=&-d_i\omega _i+P_{i}^{\mathrm{mec}}-P_{i}^{\mathrm{ld}}-\sum_{j\in \mathcal{N} _i}^{}{P_{i,j}},\forall i\in \mathcal{N} _{\mathrm{G}},
\label{eq:converter}
\\
0=&-P_{i}^{\mathrm{ld}}-\sum_{j\in\mathcal{N}_i}{P_{i,j}}~~~~~~~~~~~~~~~~~~~,\forall i\in \mathcal{N} _{\mathrm{L}}, \label{eq:load}
\end{align}
where $P_{i}^{\mathrm{mec}}$ and $P_{i}^{\mathrm{ld}}$ are the mechanical power input and electrical load power respectively; $j \in \mathcal{N}_i$ is the buses connected to bus $i$. The term $P_{i,j}$ represents the power flow from bus $i$ to $j$, where $j \in \mathcal{N}_i$ signifies the set of buses connected to $i$. The sum $\sum\nolimits_{j\in\mathcal{N}_i}{P_{i,j}}$ indicates the total electrical power injected into the grid via bus $i$. $\theta _i$ and ${\omega}_i$ denote the voltage phase angle and frequency of the bus $i$, respectively. All state variables represent deviations from their nominal values. 

If the bus $i$ is connected to a synchronous generator, Eq.\eqref{eq:converter} describes the electromechanical swing dynamics of the generator rotor angle $\theta_i$ and the rotational frequency $\omega_i$. Here, $m_i$ denotes the rotating inertia. $d_i$ is the equivalent damping provided via the amortisseur winding and related controllers. $P_{i}^{\mathrm{mec}}$ is the mechanical power input variation. The inertia and damping of synchronous machines are influenced by their inherent mechanical inertia, excitation, and windings, and are generally considered non-adjustable variables during real-time operations.

If $i$ is a DER, $m_i$ is the emulated inertia. $d_i$ is the frequency damping typically provided via droop coefficients~\cite{Lasseter2020}, and $P_{i}^{\mathrm{mec}}=0$. For the grid-forming (GFM) DERs, the control law $(m_i^{\mathrm{GFM}} s + d_i^{\mathrm{GFM}}) \Delta \omega_i = -\Delta P_i$ governing the synchronous loop directly gives the corresponding damping and inertia coefficients. $m_i^{\mathrm{GFM}}$ represents the equivalent inertia constant and $d_i^{\mathrm{GFM}}$ is the damping coefficient. For grid-following (GFL) DERs, the inertia effects originate from the phase-locked loop (PLL) dynamics. To provide damping capabilities, GFL DERs are assumed to utilize droop control in their active power control loop, expressed as $ \Delta P_i = -{d_i} \Delta \omega_i^{\mathrm{c}}$, where $\omega_i^{\mathrm{c}}$ denotes the PLL-estimated frequency and \( d_i \) represents the value of the inverse droop coefficient. It has been shown that $m_i^{\mathrm{GFL}} = d_i  \frac{\zeta_i}{\omega_{\mathrm{PLL}_i}}$ and $m_i^{\mathrm{GFL}} = m_i $, where \( \zeta_i \) is the PLL damping ratio and \( \omega_{\mathrm{PLL}_i} \) is the PLL bandwidth \cite{ducoin2024analytical}. The model uses reduced dynamics for DERs and assumes time-scale separation between outer and inner loops \cite{yang2022small,nandanoori2020distributed}. In doing so, we avoid compromising DER manufacturers' technical privacy and reduce the model's dimensionality.

When a bus hosts multiple generation units ($k=1,...,K_i$), $m_i$, $d_i$, and $P_{i}^{\mathrm{mec}}$ are the aggregated sums of the respective attributes of these units, representing their combined behavior. 

We denote the power injection to the grid via bus $i$ as $P_i = \sum\nolimits_{j\in\mathcal{N}_i}{P_{i,j}}$. In a lossless power grid, $P_i$ can be formulated using the power flow equations:
\begin{equation} \label{eq:injection}
P_i = \sum\nolimits_{j\in\mathcal{N}_i}V_iV_jB_{ij}\sin \left( \theta _i-\theta _j \right),
\end{equation}
where $V_i$ represents the voltage magnitude of bus $i$ and $B_{ij}$ is the transmission line admittance between bus $i$ and $j$. We consider a linearized and small-signal version of power flow equations \eqref{eq:injection} around the equilibrium point, which can be written as follows:
\begin{equation} \label{eq:flow}
\begin{bmatrix} 
    \boldsymbol{P}_{\mathrm{G}}\\
    \boldsymbol{P}_{\mathrm{L}}
\end{bmatrix}=
\mathbf{H}
\begin{bmatrix} 
    \boldsymbol{\theta }_{\mathrm{G}}\\
    \boldsymbol{\theta }_{\mathrm{L}}
\end{bmatrix} 
=\begin{bmatrix} 
    \mathbf{H}_{\mathrm{GG}}&		\mathbf{H}_{\mathrm{GL}}\\
    \mathbf{H}_{\mathrm{LG}}&		\mathbf{H}_{\mathrm{LL}}
\end{bmatrix}\begin{bmatrix} 
    \boldsymbol{\theta }_{\mathrm{G}}\\
    \boldsymbol{\theta }_{\mathrm{L}}
\end{bmatrix}.
\end{equation}

In this equation, $\mathbf{H}$ represents the power flow Jacobian matrix. The state variables $\boldsymbol{\theta }$ and $\boldsymbol{P}$ are the stacked voltage phasor angles and the network power injections for all buses. The subscripts $\mathrm{G}$ and $\mathrm{L}$ represent the bus corresponding to $\mathcal{N}_{\mathrm{G}}=\{1,...,I_1\}$ and $\mathcal{N}_{\mathrm{L}}=\{I_1+1,...,I\}$, respectively. $\boldsymbol{P}_{\mathrm{G}}=\mathrm{col}(P_i)_{i\in\mathcal{N}_{\mathrm{G}}}, \boldsymbol{P}_{\mathrm{L}}=\mathrm{col}(P_i)_{i\in\mathcal{N}_{\mathrm{L}}}$ and similar notation is used for $\boldsymbol{\theta }$. $\mathbf{H}_{\mathrm{GG}}$ denotes the Jacobian block corresponding to elements $H_{i,j}$ where $i\in\mathcal{N}_{\mathrm{G}}$ and $j\in\mathcal{N}_{\mathrm{G}}$. Similar definitions are adopted for the blocks $\mathbf{H}_{\mathrm{GL}}$, $\mathbf{H}_{\mathrm{LG}}$, and $\mathbf{H}_{\mathrm{LL}}$. The elements $H_{i,j}$ of the power flow Jacobian matrix can be computed as:
\begin{equation}
H_{i,j}=\begin{cases}
    \dfrac{\partial P_i}{\partial \theta _j}=-V_iV_jB_{ij}\cos \theta _{ij},~j\ne i\\
    \dfrac{\partial P_i}{\partial \theta _i}=\sum\nolimits_{j\in\mathcal{N}_i}{V_iV_jB_{ij}\cos \theta _{ij}},~j= i.
\end{cases}
\end{equation}

By defining $\boldsymbol{P}^{\mathrm{ld}}_{\mathrm{L}}=\mathrm{col}(P_i^{\mathrm{ld}})_{i\in\mathcal{N}_{\mathrm{G}}}$, and observing that $\boldsymbol{P}_{\mathrm{L}}=-\boldsymbol{P}^{\mathrm{ld}}_{\mathrm{L}}$ based on Eq.\eqref{eq:load} Thus, the variables $\boldsymbol{\theta }_{\mathrm{L}}$ can be eliminated using equation Eq.\eqref{eq:flow}, resulting in:
\begin{equation} \label{eq:eliminate}
\boldsymbol{P}_{\mathrm{G}}=\left( \mathbf{H}_{\mathrm{GG}}-\mathbf{H}_{\mathrm{GL}}\mathbf{H}_{\mathrm{LL}}^{-1}\mathbf{H}_{\mathrm{LG}} \right) \boldsymbol{\theta }_{\mathrm{G}}-\mathbf{H}_{\mathrm{GL}}\mathbf{H}_{\mathrm{LL}}^{-1}\boldsymbol{P}^{\mathrm{ld}}_{\mathrm{L}},
\end{equation}
where Kron reduction \cite{dorfler2012kron} is used to eliminate all intermediate buses, including load buses. The validity of Kron reduction is based on timescale separation. Loads typically respond much faster than generator units to frequency disturbances, with response times on the order of 0.01 seconds \cite{choi2006measurement}. Integrating Eq.\eqref{eq:eliminate} into Eq.\eqref{eq:converter}, we obtain:
\begin{align}
&\mathbf{M}\dot{\boldsymbol{\omega}}_{\mathrm{G}}+\mathbf{D}\boldsymbol{\omega }_{\mathrm{G}}+\left( \mathbf{H}_{\mathrm{GG}}-\mathbf{H}_{\mathrm{GL}}\mathbf{H}_{\mathrm{LL}}^{-1}\mathbf{H}_{\mathrm{LG}} \right) \boldsymbol{\theta }_{\mathrm{G}}\nonumber
\\
=&\boldsymbol{P}^{\mathrm{mec}}-\boldsymbol{P}_{\mathrm{G}}^{\mathrm{ld}}+\mathbf{H}_{\mathrm{GL}}\mathbf{H}_{\mathrm{LL}}^{-1}\boldsymbol{P}_{\mathrm{L}}^{\mathrm{ld}},
\end{align}
where state variables $\boldsymbol{\omega }$ are the stacked frequencies of all buses and $\boldsymbol{\omega}_{\mathrm{G}}=\mathrm{col}(\omega_i)_{i\in\mathcal{N}_{\mathrm{G}}}$. $\mathbf{M}=\mathrm{diag}\left( m_i \right)_{i\in\mathcal{N}_{\mathrm{G}}},\mathbf{D}=\mathrm{diag}\left( d_i \right)_{i\in\mathcal{N}_{\mathrm{G}}}$ are the diagonal matrices representing the aggregated nodal inertia and damping coefficients for buses ${i\in\mathcal{N}_{\mathrm{G}}}$. 

Defining the reduced Jacobian as $\mathbf{L}:=\mathbf{H}_{\mathrm{GG}}-\mathbf{H}_{\mathrm{GL}}\mathbf{H}_{\mathrm{LL}}^{-1}\mathbf{H}_{\mathrm{LG}}$ and defining the equivalent disturbance as $\boldsymbol{P}^{\mathrm{u}}:=-\boldsymbol{P}_{\mathrm{G}}^{\mathrm{ld}}+\mathbf{H}_{\mathrm{GL}}\mathbf{H}_{\mathrm{LL}}^{-1}\boldsymbol{P}_{\mathrm{L}}^{\mathrm{ld}}$, the state-space model concerning generation unit dynamics is compactly represented as:
\begin{equation} \label{eq:ss}
\begin{bmatrix}
    \boldsymbol{\dot{\theta}}_{\mathrm{G}}\\
    \boldsymbol{\dot{\omega}}_{\mathrm{G}}
\end{bmatrix} =\begin{bmatrix}
    \mathbf{0}&		\mathbf{I}\\
    -\mathbf{M}^{-1}\mathbf{L}&		-\mathbf{M}^{-1}\mathbf{D}
\end{bmatrix} \begin{bmatrix}
    \boldsymbol{\theta }_{\mathrm{G}}\\
    \boldsymbol{\omega }_{\mathrm{G}}
\end{bmatrix} + \begin{bmatrix}
    \mathbf{0}\\
    \mathbf{M}^{-1}
\end{bmatrix} (\boldsymbol{P}^{\mathrm{u}}+\boldsymbol{P}^{\mathrm{mec}}).
\end{equation}

Traditionally, frequency stability and small-signal stability for Eq. \eqref{eq:ss} have been addressed in separate contexts corresponding to different time scales. However, since inertia and damping are key parameters in DER controllers, merely considering their effect on frequency deviations is insufficient. Inappropriate inertia and damping settings can cause instability at the operating point \cite{ma2016research,A.U.-26}, let alone adequately limit frequency deviations under disturbances. Moreover, DERs' responses are significantly faster than those of conventional synchronous machines. Therefore, it is necessary to comprehensively consider the roles of inertia and damping in supporting overall system stability. Additionally, as demonstrated by the well-known structure-preserving model \cite{Hill1981}, positive damping at each node guarantees transient stability; consequently, our primary focus is on small-signal and frequency stability.

\subsection{Oscillation Damping Constraints}
First, we focus on the small-signal stability of the model \eqref{eq:ss}. The inherent mechanical processes within synchronous generators are slow; therefore, we treat mechanical power changes as zero $\boldsymbol{P}^{\mathrm{mec}}=\boldsymbol{0}$ over the time scale of interest. The objective of the oscillation damping constraints is to ensure that the system's dominant oscillation mode, denoted as \(\lambda \in \mathbb{C}\), exhibits sufficient decay speed $\beta$ and maintains a low oscillation damping ratio $\zeta$. The oscillation modes $\lambda$ are the eigenvalues of the state matrix of \eqref{eq:ss}. The eigenvalue $\lambda$ and its eigenvector $[ \boldsymbol{\nu }^{\top}\,\,\boldsymbol{\xi }^{\top} ] ^{\top}$ satisfy the following condition:
\begin{equation} \label{eq:eigenvalue_matrix}
\begin{bmatrix}
    \mathbf{0}&		\mathbf{I}\\
    -\mathbf{M}^{-1}\mathbf{L}&		-\mathbf{M}^{-1}\mathbf{D}
\end{bmatrix}  
\begin{bmatrix} 
    \boldsymbol{\nu }\\
    \boldsymbol{\xi }
\end{bmatrix} =\lambda 
\begin{bmatrix}
    \boldsymbol{\nu }\\
    \boldsymbol{\xi }
\end{bmatrix},
\end{equation}
which can be transformed into scalar quadratic equation sets concerning $\lambda$ via simple algebraic manipulations:
\begin{equation} \label{eq:quadratic_eigenvalue}
\lambda ^2\mathbf{M}\boldsymbol{\nu }+\lambda \mathbf{D}\boldsymbol{\nu }+\mathbf{L}\boldsymbol{\nu }=\boldsymbol{0}.
\end{equation}

\begin{figure}[t]
\centering
\includegraphics[width=0.45\textwidth]{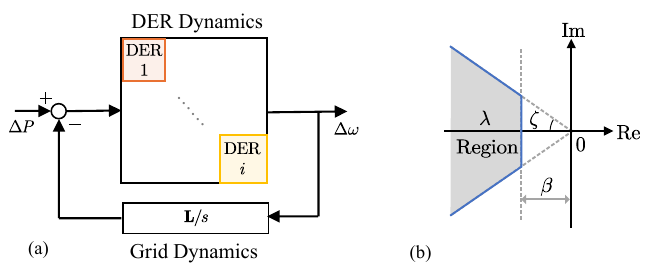}
\caption{(a) The closed-loop control diagram of DER dynamics and power network dynamics. (b) The desired region of system modes.}
\label{fig:block}
\end{figure}
Finding $\lambda$ in the matrix pencil \eqref{eq:quadratic_eigenvalue} is known as solving the quadratic eigenvalue problem (QEP). The QEP is widely used in fields such as dynamic acoustics analysis and fluid mechanics. In these applications, it represents an open-loop arrangement problem that considers modifications to a structure's mass or adjustments to the damping provided by dampers. In contrast, in power systems, inertia and damping are adjustable parameters of DERs, analogous to the PI parameters in a PI controller, and are components of closed-loop controllers. As shown in the closed-loop control block diagram of DERs and the network in Fig. \ref{fig:block}(a), their interactions yield the overall closed-loop response of the system, collectively determining the frequency deviations of the power system.

\color{black}

Notably, \(0\)  is an eigenvalue of the system. This is inherently attributed to the eigenvalue of the Jacobian matrix:
\begin{lemma}[$0$ Eigenvalue~\cite{F.F.-2}] \label{lemma:eig}
$\mathbf{L}$ is positive semi-definite with a $0$ eigenvalue. $0$ is an eigenvalue of Eq.\eqref{eq:ss}'s state matrix.
\end{lemma}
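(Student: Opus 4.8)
The plan is to establish the two assertions of the lemma in sequence: first that $\mathbf{L}$ is positive semidefinite with a null vector, and then that this null structure transfers to the state matrix of \eqref{eq:ss}. For the first part, I would recall that $\mathbf{L}$ is the Kron reduction $\mathbf{L} = \mathbf{H}_{\mathrm{GG}} - \mathbf{H}_{\mathrm{GL}}\mathbf{H}_{\mathrm{LL}}^{-1}\mathbf{H}_{\mathrm{LG}}$ of the full power-flow Jacobian $\mathbf{H}$. The key observation is that $\mathbf{H}$ is (up to sign conventions) a weighted graph Laplacian: its off-diagonal entries are $H_{i,j} = -V_iV_jB_{ij}\cos\theta_{ij}$ and its diagonal entries are $H_{i,i} = \sum_{j\in\mathcal{N}_i} V_iV_jB_{ij}\cos\theta_{ij} = -\sum_{j\ne i} H_{i,j}$, so each row sums to zero and hence $\mathbf{H}\mathbf{1} = \boldsymbol{0}$. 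Under the standard operating assumption that all angle differences satisfy $|\theta_{ij}| < \pi/2$ (so that the effective line weights $V_iV_jB_{ij}\cos\theta_{ij}$ are positive), $\mathbf{H}$ is a symmetric Laplacian of a connected graph, hence positive semidefinite with a simple zero eigenvalue whose eigenvector is $\mathbf{1}$. I would then invoke the structural result on Kron reduction (Dörfler--Bullo, cited as \cite{dorfler2012kron} in the excerpt): the Kron reduction of a connected Laplacian is again a connected Laplacian on the reduced node set, preserving positive semidefiniteness and the property that $\mathbf{L}\mathbf{1} = \boldsymbol{0}$, with $\mathbf{1}$ spanning the kernel. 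This yields $\mathbf{L} \succeq 0$ with $0$ an eigenvalue.

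For the second part, I would exhibit an explicit eigenvector of the state matrix
$\mathbf{A} = \begin{bmatrix} \mathbf{0} & \mathbf{I} \\ -\mathbf{M}^{-1}\mathbf{L} & -\mathbf{M}^{-1}\mathbf{D} \end{bmatrix}$
associated with the eigenvalue $0$. Taking the candidate $\begin{bmatrix} \boldsymbol{\nu}^{\top} & \boldsymbol{\xi}^{\top}\end{bmatrix}^{\top} = \begin{bmatrix} \mathbf{1}^{\top} & \boldsymbol{0}^{\top}\end{bmatrix}^{\top}$ and substituting into the quadratic eigenvalue relation \eqref{eq:quadratic_eigenvalue} with $\lambda = 0$, the first two terms vanish and the third reduces to $\mathbf{L}\mathbf{1} = \boldsymbol{0}$, which holds by the first part. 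Equivalently, one checks directly that $\mathbf{A}\begin{bmatrix}\mathbf{1} \\ \boldsymbol{0}\end{bmatrix} = \begin{bmatrix} \boldsymbol{0} \\ -\mathbf{M}^{-1}\mathbf{L}\mathbf{1}\end{bmatrix} = \boldsymbol{0}$. Hence $0 \in \mathrm{spec}(\mathbf{A})$. Physically this is the rotational-symmetry mode: a uniform shift of all generator angles with zero frequency deviation leaves the dynamics invariant, since only angle \emph{differences} enter the power flow.

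The main obstacle I anticipate is not any deep computation but rather making precise the operating-point hypotheses under which $\mathbf{H}$ is genuinely a positive semidefinite Laplacian and $\mathbf{H}_{\mathrm{LL}}$ is invertible so that the Kron reduction is well defined. Specifically, one must assume the equilibrium satisfies a security condition such as $|\theta_{ij}| < \pi/2$ on all lines (ensuring positive edge weights and thus connectedness of the reduced graph), and that the load subgraph induced on $\mathcal{N}_{\mathrm{L}}$ together with its connections to $\mathcal{N}_{\mathrm{G}}$ renders $\mathbf{H}_{\mathrm{LL}}$ nonsingular --- both of which are standard in the Kron-reduction literature \cite{dorfler2012kron}. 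Once these mild assumptions are stated, the remaining arguments are short: positive semidefiniteness and the zero eigenvalue of $\mathbf{L}$ follow from Laplacian theory plus the Kron-reduction closure property, and the zero eigenvalue of the state matrix follows by the one-line substitution above.
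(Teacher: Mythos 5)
Your proposal is correct: the Laplacian structure of $\mathbf{H}$ (zero row sums, positive effective edge weights under the usual $|\theta_{ij}|<\pi/2$ security assumption), the closure of Kron reduction on connected Laplacians giving $\mathbf{L}\succcurlyeq 0$ with $\mathbf{L}\mathbf{1}=\boldsymbol{0}$, and the explicit null vector $[\mathbf{1}^{\top}~\boldsymbol{0}^{\top}]^{\top}$ of the state matrix are exactly the standard argument. The paper itself gives no proof of this lemma but simply cites it (\cite{F.F.-2}), so your write-up essentially reconstructs the cited result, and your care in flagging the operating-point hypotheses and the nonsingularity of $\mathbf{H}_{\mathrm{LL}}$ is appropriate rather than a gap.
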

The eigenvalue of \(0\) represents a global shift in the phase angles of all generation units, and angles are not independent states. Angle differences are the true independent variables~\cite{Anderson2003}. The small-signal constraints are about to constrain system modes in the desired region, as Fig. \ref{fig:block}(b).

\textit{Mode Decay}: The location constraint $\mathrm{Re}\left( \lambda \right) \leqslant -\beta$ ensures that transient fluctuations in angles and frequencies do not persist over an extended period. We have the following Proposition to ensure sufficient mode decay: 
\begin{prop} [Mode Decay] \label{th:decay}
The eigenvalues of \eqref{eq:ss} will all be within $\mathrm{Re}\left(\lambda\right) \leqslant -\beta$ except for the eigenvalue $0$ if there exists a positive variable $v$ such that:
\begin{align}\label{eq:plane}
    \mathbf{D}-2\beta \mathbf{M}\succcurlyeq 0,~~\mathbf{L}-\beta \mathbf{D}+\beta ^2\mathbf{M}+v\mathbf{1}\mathbf{1}^{\top}\succcurlyeq 0.
\end{align}
\end{prop}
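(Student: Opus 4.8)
The plan is to work directly from the scalar quadratic eigenvalue relation \eqref{eq:quadratic_eigenvalue} and to translate the geometric requirement $\mathrm{Re}(\lambda)\leqslant-\beta$ into a spectral condition on Hermitian forms. Let $\lambda$ be any eigenvalue of the state matrix with eigenvector component $\boldsymbol{\nu}\neq\boldsymbol{0}$, so that $\lambda^2\mathbf{M}\boldsymbol{\nu}+\lambda\mathbf{D}\boldsymbol{\nu}+\mathbf{L}\boldsymbol{\nu}=\boldsymbol{0}$. Pre-multiplying by $\boldsymbol{\nu}^{\mathrm{H}}$ and writing $a=\boldsymbol{\nu}^{\mathrm{H}}\mathbf{M}\boldsymbol{\nu}>0$, $b=\boldsymbol{\nu}^{\mathrm{H}}\mathbf{D}\boldsymbol{\nu}\geqslant0$, $c=\boldsymbol{\nu}^{\mathrm{H}}\mathbf{L}\boldsymbol{\nu}\geqslant0$ (all real, since $\mathbf{M},\mathbf{D},\mathbf{L}$ are symmetric, with $a>0$ because $\mathbf{M}\succ0$ and $c\geqslant0$ by Lemma~\ref{lemma:eig}), we get the scalar quadratic $a\lambda^2+b\lambda+c=0$. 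Thus every eigenvalue $\lambda$ is a root of some scalar quadratic with nonnegative coefficients and positive leading coefficient, and the problem reduces to: under what conditions on $a,b,c$ does $a\lambda^2+b\lambda+c=0$ force $\mathrm{Re}(\lambda)\leqslant-\beta$?

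The second step is the standard root-shifting trick. Substitute $\lambda=\mu-\beta$; then $a\lambda^2+b\lambda+c=0$ becomes $a\mu^2+(b-2\beta a)\mu+(c-\beta b+\beta^2 a)=0$, and $\mathrm{Re}(\lambda)\leqslant-\beta$ is equivalent to $\mathrm{Re}(\mu)\leqslant0$. A real quadratic $a\mu^2+\tilde b\mu+\tilde c$ with $a>0$ has both roots in the closed left half-plane iff $\tilde b\geqslant0$ and $\tilde c\geqslant0$ (this is the Routh--Hurwitz criterion for degree two, and covers the complex-conjugate case as well as the real case). Translating back: we need $b-2\beta a\geqslant0$ and $c-\beta b+\beta^2 a\geqslant0$ for every admissible $\boldsymbol{\nu}$, i.e. $\boldsymbol{\nu}^{\mathrm{H}}(\mathbf{D}-2\beta\mathbf{M})\boldsymbol{\nu}\geqslant0$ and $\boldsymbol{\nu}^{\mathrm{H}}(\mathbf{L}-\beta\mathbf{D}+\beta^2\mathbf{M})\boldsymbol{\nu}\geqslant0$. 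The first is exactly $\mathbf{D}-2\beta\mathbf{M}\succcurlyeq0$. The first inequality in \eqref{eq:plane}.

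The remaining obstacle — and the reason the $v\mathbf{1}\mathbf{1}^{\top}$ term appears — is the zero eigenvalue. We do \emph{not} want to force $\mathbf{L}-\beta\mathbf{D}+\beta^2\mathbf{M}\succcurlyeq0$ outright, because the eigenvector $\boldsymbol{\nu}=\mathbf{1}$ associated with $\lambda=0$ gives $\boldsymbol{\nu}^{\mathrm{H}}(\mathbf{L}-\beta\mathbf{D}+\beta^2\mathbf{M})\boldsymbol{\nu}=\mathbf{1}^{\top}(-\beta\mathbf{D}+\beta^2\mathbf{M})\mathbf{1}<0$ whenever $\mathbf{D}\succ 2\beta\mathbf{M}$ is strict, so positive semidefiniteness on the whole space is impossible. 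Instead I would argue as follows: for any eigenvalue $\lambda\neq0$, the corresponding $\boldsymbol{\nu}$ must satisfy $\mathbf{L}\boldsymbol{\nu}\neq\boldsymbol{0}$ — equivalently $\boldsymbol{\nu}\notin\ker\mathbf{L}=\mathrm{span}\{\mathbf{1}\}$ (using $\mathbf{L}\succcurlyeq0$ with one-dimensional kernel, Lemma~\ref{lemma:eig}) — for otherwise $a\lambda^2+b\lambda=0$ with $a>0$ would give $\lambda=0$ or $\lambda=-b/a$ real negative, and in the degenerate $\mathbf{L}\boldsymbol\nu=\mathbf 0$ sub-case one checks directly $\mathrm{Re}(\lambda)=-b/a\le-2\beta\le-\beta$ already from the first LMI. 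So it suffices to show $\boldsymbol{\nu}^{\mathrm{H}}(\mathbf{L}-\beta\mathbf{D}+\beta^2\mathbf{M})\boldsymbol{\nu}\geqslant0$ for all $\boldsymbol{\nu}$ with $\mathbf{1}^{\top}\boldsymbol{\nu}$ not constrained but $\boldsymbol\nu$ ranging only where it matters; the clean way is to observe that adding $v\mathbf{1}\mathbf{1}^{\top}$ with $v>0$ makes $\mathbf{L}-\beta\mathbf{D}+\beta^2\mathbf{M}+v\mathbf{1}\mathbf{1}^{\top}\succcurlyeq0$ a condition that, when restricted to the complement of $\mathrm{span}\{\mathbf 1\}$ (where eigenvectors of nonzero modes live, after noting the real phasor/frequency eigenvectors can be taken with $\boldsymbol\nu\perp\mathbf 1$ in the relevant quotient — or more carefully, decompose $\boldsymbol\nu=\alpha\mathbf 1+\boldsymbol\nu_\perp$ and use that the quadratic relation only sees $\mathbf L\boldsymbol\nu=\mathbf L\boldsymbol\nu_\perp$), still yields $\boldsymbol\nu^{\mathrm H}(\mathbf L-\beta\mathbf D+\beta^2\mathbf M)\boldsymbol\nu\ge 0$. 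Concretely: if $\mathbf{X}:=\mathbf{L}-\beta\mathbf{D}+\beta^2\mathbf{M}+v\mathbf{1}\mathbf{1}^{\top}\succcurlyeq0$, then for any $\boldsymbol\nu$ one has $\boldsymbol\nu^{\mathrm H}(\mathbf L-\beta\mathbf D+\beta^2\mathbf M)\boldsymbol\nu=\boldsymbol\nu^{\mathrm H}\mathbf X\boldsymbol\nu-v|\mathbf 1^{\top}\boldsymbol\nu|^2$, which is nonnegative \emph{provided} $\mathbf 1^{\top}\boldsymbol\nu=0$; and for a nonzero mode we may indeed normalize its eigenvector so that its angle-component $\boldsymbol\nu$ satisfies $\mathbf 1^\top\boldsymbol\nu = 0$, because shifting $\boldsymbol\nu\mapsto\boldsymbol\nu+\alpha\mathbf 1$ changes neither $\lambda$ nor the validity of \eqref{eq:quadratic_eigenvalue} (as $\mathbf L\mathbf 1=\mathbf 0$ and, from the first block of \eqref{eq:eigenvalue_matrix}, $\boldsymbol\xi=\lambda\boldsymbol\nu$, so the shift is absorbed consistently). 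The main technical care is exactly this handling of the kernel direction; once it is in place, combining the two Hermitian inequalities with the scalar Routh--Hurwitz argument closes the proof.
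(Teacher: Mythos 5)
Your reduction to the scalar quadratic $a\lambda^2+b\lambda+c=0$ and the Routh--Hurwitz/shift argument is sound for the complex modes (there only $b-2\beta a\geqslant 0$ is needed, which is exactly how the paper treats complex pairs via Vieta's formulas on the shifted pencil). The genuine gap is in the real-mode case, at precisely the point you flag as ``the main technical care'': you need $\boldsymbol{\nu}^{\mathrm H}\left( \mathbf{L}-\beta \mathbf{D}+\beta ^2\mathbf{M} \right)\boldsymbol{\nu}\geqslant 0$ at the \emph{actual} eigenvector, and your way of getting it --- replacing $\boldsymbol{\nu}$ by $\boldsymbol{\nu}+\alpha\mathbf{1}$ to enforce $\mathbf{1}^{\top}\boldsymbol{\nu}=0$ --- is not legitimate. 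Substituting $\boldsymbol{\nu}+\alpha\mathbf{1}$ into \eqref{eq:quadratic_eigenvalue} leaves the residual $\alpha\lambda\left( \lambda \mathbf{M}+\mathbf{D} \right)\mathbf{1}$, which vanishes only if $\lambda=0$ or $d_i/m_i=-\lambda$ for every $i$; $\mathbf{1}$ lies in the kernel of $\mathbf{L}$ but not of $\lambda^2\mathbf{M}+\lambda\mathbf{D}$, so the ``shift is absorbed consistently'' claim fails (equivalently, the quadratic relation does not ``only see $\mathbf{L}\boldsymbol{\nu}$'': the coefficients $a$ and $b$ change under the shift). For heterogeneous $m_i,d_i$ the eigenvectors of nonzero modes are in general \emph{not} orthogonal to $\mathbf{1}$ (a two-node example with unequal $d_i/m_i$ already shows this), so $\boldsymbol{\nu}^{\mathrm H}\mathbf{X}\boldsymbol{\nu}-v\,|\mathbf{1}^{\top}\boldsymbol{\nu}|^2$ can be negative and your sufficient condition $\tilde c\geqslant 0$ simply does not follow from \eqref{eq:plane}. (Your side remarks are fine: the degenerate case $\mathbf{L}\boldsymbol{\nu}=\mathbf{0}$, $\lambda\neq 0$ indeed gives $\lambda\leqslant-2\beta$, and full-space positive semidefiniteness of $\mathbf{L}-\beta\mathbf{D}+\beta^2\mathbf{M}$ is correctly ruled out.)

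This is exactly the obstruction the paper's proof is built to avoid: instead of testing the quadratic form at the eigenvector, it argues by contradiction on the shifted pencil. For a hypothetical real shifted eigenvalue $0<\hat{\lambda}<\beta$ it forms $\mathbf{A}(\hat{\lambda})=\hat{\lambda}^2\mathbf{M}+\hat{\lambda}\left( \mathbf{D}-2\beta\mathbf{M} \right)+\left( \mathbf{L}-\beta\mathbf{D}+\beta^2\mathbf{M} \right)$, uses the $v\mathbf{1}\mathbf{1}^{\top}$-augmented LMI only to conclude positive definiteness of $\mathbf{A}(\hat{\lambda})$ on the subspace $\mathbf{1}^{\perp}$ (Lemma~\ref{orth}), then an inertia-counting lemma (Lemma~\ref{zero}) to show $\mathbf{A}(\hat{\lambda})$ has at most one nonpositive eigenvalue, and finally the estimate $\mathbf{1}^{\top}\mathbf{A}(\hat{\lambda})\mathbf{1}<\mathbf{1}^{\top}\mathbf{L}\mathbf{1}=0$ to pin the inertia at one negative and $N-1$ positive eigenvalues; hence $\mathbf{A}(\hat{\lambda})$ is nonsingular and admits no eigenvector, a contradiction. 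To repair your proof you would have to replace the orthogonalization step by an argument of this matrix-inertia type (or some other device that never requires the eigenvector itself to lie in $\mathbb{I}_{\bot}$); as written, the real-eigenvalue half of the proposition is not established.
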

\begin{proof}
The complete proof is attached in the Appendix. 
\end{proof}

\textit{Mode Damping}: The constraint in the conic region, $\sin \zeta \cdot \mathrm{Re}\left( \lambda \right) +\cos \zeta \cdot \mathrm{Im}\left( \lambda \right) \leqslant 0$, ensures that the system does not undergo serious oscillations. The following proposition provides the condition under which an eigenvalue $\lambda$ lies within the desired conic region:
\begin{prop} [Mode Damping] \label{th:conic}
When $\mathbf{M}\succ \mathbf{0}$, all eigenvalues lie in the conic region $\sin \zeta \cdot \mathrm{Re}\left( \lambda \right) +\cos \zeta \cdot \mathrm{Im}\left( \lambda \right) \leqslant 0$ if and only if the following condition is satisfied:
\begin{equation} \label{eq:conic_original}
    \left< \boldsymbol{\nu },\mathbf{D}\boldsymbol{\nu } \right> ^2-4\left( \cos \zeta \right) ^2\left< \boldsymbol{\nu },\mathbf{L}\boldsymbol{\nu } \right> \left< \boldsymbol{\nu },\mathbf{M}\boldsymbol{\nu } \right> \geqslant 0.
\end{equation}
\end{prop}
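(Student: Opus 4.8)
The plan is to collapse the matrix quadratic eigenvalue problem \eqref{eq:quadratic_eigenvalue} into a scalar quadratic in $\lambda$ and then to characterise exactly when its two roots lie inside the desired conic region. First I would fix an eigenpair $(\lambda,\boldsymbol{\nu})$ of \eqref{eq:quadratic_eigenvalue} and left-multiply by $\boldsymbol{\nu}^{\mathrm{H}}$, which yields $a\lambda^2+b\lambda+c=0$ with $a=\langle\boldsymbol{\nu},\mathbf{M}\boldsymbol{\nu}\rangle$, $b=\langle\boldsymbol{\nu},\mathbf{D}\boldsymbol{\nu}\rangle$, $c=\langle\boldsymbol{\nu},\mathbf{L}\boldsymbol{\nu}\rangle$. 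Since $\mathbf{M},\mathbf{D},\mathbf{L}$ are real symmetric these three quadratic forms are real, and their signs are pinned down by the hypotheses: $a>0$ because $\mathbf{M}\succ\mathbf{0}$ and $\boldsymbol{\nu}\neq\boldsymbol{0}$ (this is precisely where $\mathbf{M}\succ\mathbf{0}$ enters), $b\geq 0$ because $\mathbf{D}$ is diagonal with positive entries, and $c\geq 0$ by Lemma~\ref{lemma:eig}. A useful by-product worth recording is that these signs already force $\mathrm{Re}(\lambda)\leq 0$ for every mode.

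Next I would split on the discriminant $\Delta=b^2-4ac$. If $\Delta\geq 0$, both roots are real with sum $-b/a\leq 0$ and product $c/a\geq 0$, hence both are non-positive; such points automatically satisfy $\sin\zeta\,\mathrm{Re}(\lambda)+\cos\zeta\,\mathrm{Im}(\lambda)\leq 0$, while simultaneously $b^2\geq 4ac\geq 4(\cos\zeta)^2ac$, so \eqref{eq:conic_original} also holds --- the two sides of the ``iff'' are jointly true in this regime. If $\Delta<0$, the roots form a conjugate pair $\lambda_{\pm}=-\tfrac{b}{2a}\pm j\tfrac{\sqrt{4ac-b^2}}{2a}$; because they are conjugate, the branch with negative imaginary part meets the region inequality for free, so ``both roots in the region'' reduces to the single scalar inequality $\cos\zeta\,\sqrt{4ac-b^2}\leq\sin\zeta\,b$. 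Both sides are nonnegative, so squaring is an equivalence: it gives $(\cos\zeta)^2(4ac-b^2)\leq(\sin\zeta)^2b^2$, and using $(\sin\zeta)^2+(\cos\zeta)^2=1$ this rearranges to exactly $\langle\boldsymbol{\nu},\mathbf{D}\boldsymbol{\nu}\rangle^2-4(\cos\zeta)^2\langle\boldsymbol{\nu},\mathbf{L}\boldsymbol{\nu}\rangle\langle\boldsymbol{\nu},\mathbf{M}\boldsymbol{\nu}\rangle\geq 0$. Combining the two cases gives, eigenvector by eigenvector, ``$\lambda$ lies in the conic region $\Leftrightarrow$ \eqref{eq:conic_original}'', and the proposition follows by taking the conjunction over all eigenpairs of \eqref{eq:quadratic_eigenvalue}.

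The step I expect to require the most care is the complex-root case: I must (i) invoke the fact that complex modes occur in conjugate pairs so that constraining only the $\mathrm{Im}(\lambda)\geq 0$ branch is without loss, (ii) check that both sides of the pre-squaring inequality are genuinely nonnegative --- which again leans on $b\geq 0$ and $\mathrm{Re}(\lambda)\leq 0$ --- and (iii) not lose the degenerate sub-case $b=0$ (purely imaginary roots), where both the region condition and \eqref{eq:conic_original} collapse to $c(\cos\zeta)^2\leq 0$ and the equivalence still holds. Everything else is bookkeeping with the signs of $a,b,c$. I would also state explicitly that \eqref{eq:conic_original} is to be read as holding for every eigenvector $\boldsymbol{\nu}$ of \eqref{eq:quadratic_eigenvalue}; for the convex reformulation used later one may instead impose it for all $\boldsymbol{\nu}\in\mathbb{C}^{I_1}\setminus\{\boldsymbol{0}\}$, which is the (slightly conservative) numerical-range version of the same bound.
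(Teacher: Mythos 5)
Your proposal is correct and follows essentially the same route as the paper's proof: left-multiply the quadratic eigenvalue problem by $\boldsymbol{\nu}^{\mathrm{H}}$ to obtain the scalar quadratic, treat real roots separately, and for complex conjugate pairs square the conic inequality (you via the explicit root formula, the paper via Vieta's formulas) and use $(\sin\zeta)^2+(\cos\zeta)^2=1$ to recover \eqref{eq:conic_original}. Your extra bookkeeping on the signs of $\langle\boldsymbol{\nu},\mathbf{D}\boldsymbol{\nu}\rangle$, $\langle\boldsymbol{\nu},\mathbf{L}\boldsymbol{\nu}\rangle$, the resulting $\mathrm{Re}(\lambda)\leqslant 0$, and the $b=0$ degenerate case only makes explicit what the paper leaves implicit.
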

\begin{proof}
The complete proof is attached in the Appendix. 
\end{proof}

The constraint in Eq.\eqref{eq:conic_original} is nonconvex because it also depends on the eigenvectors. However, the mode decay constraint Eq.\eqref{eq:plane} provides a natural method for transforming the constraint Eq.\eqref{eq:conic_original} into a convex constraint. By combining Proposition \ref{th:decay} and Proposition \ref{th:conic}, we obtain the complete convex formulations of constraints for small-signal stability:
\begin{theorem} [Mode Placement Constraints] \label{th:placement}
If the following condition is satisfied, all eigenvalues are in the left half-plane $\lambda \leqslant -\beta $ and the conic region $\sin \zeta \cdot \mathrm{Re}\left( \lambda \right) +\cos \zeta \cdot \mathrm{Im}\left( \lambda \right) \leqslant 0$ simultaneously:
\begin{align} \label{eq:eigenvalue_constraints}
    \text{Constraint}~\eqref{eq:plane},~~\beta\mathbf{D}-2\left( \cos \zeta \right) ^2\mathbf{L}\succcurlyeq 0.
\end{align}
\end{theorem}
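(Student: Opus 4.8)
The plan is to derive Theorem~\ref{th:placement} directly from Propositions~\ref{th:decay} and~\ref{th:conic}, using the first inequality of Constraint~\eqref{eq:plane} as the bridge that converts the eigenvector-dependent damping condition \eqref{eq:conic_original} into the purely matrix-level condition $\beta\mathbf{D}-2(\cos\zeta)^2\mathbf{L}\succcurlyeq\mathbf{0}$. Since Constraint~\eqref{eq:plane} is exactly the hypothesis of Proposition~\ref{th:decay}, it already places every eigenvalue of the state matrix in \eqref{eq:ss}, other than the structural $0$ of Lemma~\ref{lemma:eig}, in the half-plane $\mathrm{Re}(\lambda)\le-\beta$. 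Hence the only work left is to show that \eqref{eq:eigenvalue_constraints} implies \eqref{eq:conic_original} for the eigenvector $\boldsymbol\nu$ of every mode, after which Proposition~\ref{th:conic} (valid because $\mathbf{M}\succ\mathbf{0}$) supplies the conic inclusion for all modes; the $0$ mode sits at the apex of the cone and meets the conic inequality trivially, which is why no $\mathbf{1}\mathbf{1}^{\top}$ correction term is needed here.

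Establishing \eqref{eq:conic_original} is a short chain of quadratic-form estimates, valid for any complex vector $\boldsymbol\nu$ since $\mathbf{M},\mathbf{D},\mathbf{L}$ are real symmetric and hence all the Hermitian forms below are real. The left inequality of \eqref{eq:plane}, $\mathbf{D}\succcurlyeq2\beta\mathbf{M}$, gives $\langle\boldsymbol\nu,\mathbf{M}\boldsymbol\nu\rangle\le\tfrac{1}{2\beta}\langle\boldsymbol\nu,\mathbf{D}\boldsymbol\nu\rangle$. Since $\mathbf{L}\succcurlyeq\mathbf{0}$ by Lemma~\ref{lemma:eig}, the factor $4(\cos\zeta)^2\langle\boldsymbol\nu,\mathbf{L}\boldsymbol\nu\rangle$ is nonnegative, so multiplying through preserves the direction:
\begin{equation*}
4(\cos\zeta)^2\langle\boldsymbol\nu,\mathbf{L}\boldsymbol\nu\rangle\langle\boldsymbol\nu,\mathbf{M}\boldsymbol\nu\rangle\le\frac{2(\cos\zeta)^2}{\beta}\langle\boldsymbol\nu,\mathbf{L}\boldsymbol\nu\rangle\langle\boldsymbol\nu,\mathbf{D}\boldsymbol\nu\rangle.
\end{equation*}
Finally, the second inequality of \eqref{eq:eigenvalue_constraints}, $2(\cos\zeta)^2\mathbf{L}\preccurlyeq\beta\mathbf{D}$, yields $2(\cos\zeta)^2\langle\boldsymbol\nu,\mathbf{L}\boldsymbol\nu\rangle\le\beta\langle\boldsymbol\nu,\mathbf{D}\boldsymbol\nu\rangle$; substituting this into the right-hand side above bounds it by $\langle\boldsymbol\nu,\mathbf{D}\boldsymbol\nu\rangle^2$, which is precisely \eqref{eq:conic_original}.

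Two small points make the argument airtight. The last substitution implicitly multiplies by $\langle\boldsymbol\nu,\mathbf{D}\boldsymbol\nu\rangle$, so one should confirm $\langle\boldsymbol\nu,\mathbf{D}\boldsymbol\nu\rangle>0$: with $\beta>0$ and $\mathbf{M}\succ\mathbf{0}$, \eqref{eq:plane} forces $\mathbf{D}\succcurlyeq2\beta\mathbf{M}\succ\mathbf{0}$, so $\langle\boldsymbol\nu,\mathbf{D}\boldsymbol\nu\rangle=0$ could happen only if $\boldsymbol\nu=\mathbf{0}$; but the top block row of \eqref{eq:eigenvalue_matrix} reads $\boldsymbol\xi=\lambda\boldsymbol\nu$, so $\boldsymbol\nu=\mathbf{0}$ would force the whole eigenvector to vanish, a contradiction. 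I therefore do not expect a genuine obstacle: the difficulty, such as it is, lies purely in the bookkeeping --- distinguishing matrix (Loewner-order) inequalities from their scalar shadows, checking that the sign of the multiplying factor $\langle\boldsymbol\nu,\mathbf{L}\boldsymbol\nu\rangle$ is controlled, and disposing of the $\langle\boldsymbol\nu,\mathbf{D}\boldsymbol\nu\rangle=0$ corner case instead of dividing blindly.
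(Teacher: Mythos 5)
Your proposal is correct and follows essentially the same route as the paper: combine the two matrix inequalities of \eqref{eq:eigenvalue_constraints} into the scalar condition \eqref{eq:conic_original} via quadratic-form estimates, then invoke Proposition~\ref{th:conic} for the conic region and Proposition~\ref{th:decay} for the half-plane. The only difference is bookkeeping---the paper chains through $\langle\boldsymbol\nu,\mathbf{D}\boldsymbol\nu\rangle^2\geqslant 2\beta\langle\boldsymbol\nu,\mathbf{M}\boldsymbol\nu\rangle\langle\boldsymbol\nu,\mathbf{D}\boldsymbol\nu\rangle$ and multiplies the second matrix inequality by $\langle\boldsymbol\nu,\mathbf{M}\boldsymbol\nu\rangle>0$, whereas you multiply by $\langle\boldsymbol\nu,\mathbf{L}\boldsymbol\nu\rangle\geqslant 0$ and $\langle\boldsymbol\nu,\mathbf{D}\boldsymbol\nu\rangle>0$ instead, and you additionally spell out the nondegeneracy of $\boldsymbol\nu$ that the paper leaves implicit.
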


The small-signal stability constraints in Eq. \eqref{eq:eigenvalue_constraints} are all expressed as positive semi-definite constraints. These nonlinear but convex matrix inequalities partially capture the complex interactions between the network matrix $\mathbf{L}$ and the DERs. 

Notice Theorem \ref{th:placement} presents sufficient parametric conditions for constraining the eigenvalue locations. Although necessary and sufficient conditions for eigenvalue placement can be derived using linear matrix inequalities (LMIs), these conditions result in non-convex bilinear matrix inequalities. The proposed convex constraints are computationally efficient but can also be conservative. Nevertheless, appropriate over-estimations can enhance the robustness of the allocation plan.

Besides, Theorem \ref{th:placement} does not require the uniform-ratio assumption. Only when constraint Eq. \eqref{eq:plane} is strictly binding (\(\mathbf{D}=2\beta \mathbf{M}\)) for all nodes, all DERs will adopt the uniform inertia/damping ratio \(2\beta\). In this case, the method simplifies to the oscillation mode calculation approaches indicated by extensive research~\cite{SajadiKenyon-30,M.X.-29,GuoZhao-28}.

\subsection{Frequency Stability Constraints}
Here, we focus on the frequency stability of the model \eqref{eq:ss}. Frequency stability is usually concerned with the system's response over a longer time period following a significant step disturbance. Slow dynamics, such as generator turbine dynamics, and steady-state frequency deviations are typically considered. On the time scales of interest, the mechanical dynamics of the bus turbine $i$ is:
\begin{equation} \label{eq:turbine}
\tau_i \dot{P}^{\mathrm{mec}}_{i}=-P^{\mathrm{mec}}_{i}-r_i^{-1}\omega_{i},
\end{equation}
where $r_i$ is the turbine droop coefficient and $\tau_i$ is the turbine time constant. Eq.\eqref{eq:turbine} indicates that the mechanical part of synchronous generators injects power according to frequency deviations. But the response is not instantaneous and has a power ramping delay $\tau_i$. Keeping in line with traditional methodology, we analyze frequency stability in a global coordinate, usually denoted by the center of inertia (COI) frequency. The differential equations governing the COI frequency dynamics are~\cite{Global2020,guggilam2018optimizing}:
\begin{align}
&m\dot{\omega}_{\mathrm{C}}+d\omega _{\mathrm{C}}=P^{\mathrm{u}}+P^{\mathrm{mec}}, \label{eq:turbine1}
\\
&\tau \dot{P}^{\mathrm{mec}}=-P^{\mathrm{mec}}-r^{-1}\omega_{\mathrm{C}}, \label{eq:turbine2}
\end{align}
where $\omega _{\mathrm{C}}$ is the COI frequency and $P^{\mathrm{mec}}$ is aggregated turbine dynamics. In Eq.\eqref{eq:turbine1}, $m=\sum_i m_i$, \(d=\sum_i d_i\) and $P^{\mathrm{u}}=\sum_i P^{\mathrm{u}}_{i}$ represent the total inertia, damping, and disturbance of the system, respectively. In Eq.\eqref{eq:turbine2}, $\tau=(\sum_i \tau_i)^{-1}$ and $r=\sum_i r_i$ represent the equivalent turbine constant and droop coefficient for synchronous generators~\cite{Global2020,guggilam2018optimizing}, respectively. Since the aggregated generator turbine dynamics of \(r\) and \(\tau\) are independent of \(m\) and \(d\), they can be treated as given.

According to Eqs.\eqref{eq:turbine1}-\eqref{eq:turbine2}, $\omega _{\mathrm{C}}\left( s \right)$ following a step disturbance can be represented as:
\begin{equation} \label{eq:omega_coi}
\omega _{\mathrm{C}}\left( s \right) =\frac{P^{\mathrm{u}} \left( \tau s+1 \right)}{s\left( m\tau s^2+\left( m+d\tau \right) s+d+r^{-1} \right)}.
\end{equation}

Common indicators of frequency stability performance in the system include the rate of change of frequency (RoCoF), frequency nadir, and the steady-state frequency deviations after disturbance. The maximum RoCoF (obtained at the time instant $t=0^+$ \cite{Global2020}) and the steady-state frequency can be computed as: 
\begin{align}
& \left| \dot{\omega}_{\mathrm{C}}(t) \right|_{\infty}=\underset{s\rightarrow \infty}{\lim}s\dot{\omega}_{\mathrm{C}}\left( s \right) =m^{-1}\left| P^{\mathrm{u}}  \right|,\label{eq:initial}
\\
& \left| \omega _{\mathrm{C}}\left( \infty \right) \right|=\underset{s\rightarrow 0}{\lim}s \omega _{\mathrm{C}}\left( s \right) =\left( d+r^{-1} \right) ^{-1}\left| P^{\mathrm{u}}  \right|.\label{eq:synchronized}
\end{align}

The RoCoF should be below $\varepsilon _{\mathrm{RoCoF}}$. The maximum allowed steady-state frequency deviation is $\varepsilon _{\mathrm{sync}}$. They are constrained as follows:
\begin{equation}\label{eq:COI_rocof}
\left| P^{\mathrm{u}} \right|\leqslant \varepsilon _{\mathrm{RoCoF}}m_{\Sigma},~~\left| P^{\mathrm{u}}  \right|\leqslant \varepsilon _{\mathrm{sync}}\left( d+r^{-1} \right).
\end{equation}

The COI RoCoF constraint serves as an approximation for individual buses. In industrial practice, the RoCoF is measured over a period of time for each bus, rather than as the instantaneous derivative. Therefore, it is crucial to incorporate both the COI RoCoF constraint and oscillation damping constraints to ensure that the real RoCoF values are satisfactory.

By setting \(\dot{\omega} _{\mathrm{COI}}\left( t \right) = 0\), we can determine the time of the frequency nadir and subsequently derive the expressions for the frequency nadir.
\begin{lemma}[Frequency Nadir \cite{Global2020}]
Given the power system COI frequency dynamics via Eq.\eqref{eq:omega_coi}, the COI frequency nadir is given by:
\begin{equation} \label{eq:nadir_time}
    \left| \omega _{\mathrm{COI}}(t)  \right|_{\infty}=\frac{\left| P^{\mathrm{u}}  \right|}{d+r^{-1}}\left( 1+\sqrt{\frac{\tau r^{-1}}{m}}e^{-\frac{\eta}{\omega _{\mathrm{d}}}\left( \phi +\frac{\pi}{2} \right)} \right),
\end{equation}
where $\omega _{\mathrm{d}}$, $\eta$, and $\phi$ are uniquely determined by:
$$
\begin{aligned}
    &\omega _{\mathrm{d}}=\sqrt{\frac{d+r^{-1}}{m\tau}-\frac{1}{4}\left( \frac{1}{\tau}+\frac{d}{m} \right) ^2}, \eta =\frac{1}{2}\left( \tau ^{-1}+\frac{d}{m} \right),
    \\
    &\sin(\phi )=\frac{\left( \tau ^{-1}-\eta \right)}{\sqrt{\omega _{\mathrm{d}}^{2}+\left( \tau ^{-1}-\eta \right) ^2}}=\frac{m-d\tau}{2\sqrt{m\tau r^{-1}}}.
\end{aligned}
$$
For analysis, upon knowing the value of $\tau$ and $r$, we denote the inverse of the frequency nadir value as a function of $m$ and $d$ as $g \left( m,d \right) = \left| \omega _{\mathrm{COI}}(t)  \right|_{\infty}^{-1}$.
\end{lemma}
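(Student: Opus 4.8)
The plan is a direct time‑domain computation that reproduces the known result of \cite{Global2020}: invert the Laplace expression \eqref{eq:omega_coi} for a step disturbance, locate the first extremum of the resulting underdamped response, and substitute it back. By linearity it suffices to treat $P^{\mathrm u}>0$, the general magnitude following by scaling, and the bulk of the work is algebraic bookkeeping rather than any conceptual step.

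\textbf{Inverse transform.} First I would factor the denominator of \eqref{eq:omega_coi} as $m\tau(s^2+2\eta s+\omega_n^2)$ with $\eta=\tfrac12(\tau^{-1}+d/m)$ and $\omega_n^2=(d+r^{-1})/(m\tau)$, so that under the standing underdamped assumption $\omega_n^2>\eta^2$ (which makes $\omega_{\mathrm d}=\sqrt{\omega_n^2-\eta^2}$ and $\phi$ well defined and the response genuinely oscillatory) the nonzero poles are $-\eta\pm j\omega_{\mathrm d}$. A partial‑fraction split of $\omega_{\mathrm C}(s)$ into $A/s$ plus a second‑order remainder gives the residue at the origin $A=P^{\mathrm u}/(d+r^{-1})$, matching \eqref{eq:synchronized}, and inverse transformation yields $\omega_{\mathrm C}(t)=A+R\,e^{-\eta t}\sin(\omega_{\mathrm d}t-\psi)$, with $R>0$ and $\psi$ fixed by the remaining residues (the numerator zero at $s=-\tau^{-1}$ enters only through these constants; one finds $R\sin\psi=A$ and $R\cos\psi=A(\omega_n^2\tau-\eta)/\omega_{\mathrm d}$, so $\psi=\arctan\!\big(\omega_{\mathrm d}/(\omega_n^2\tau-\eta)\big)$ is independent of $P^{\mathrm u}$).

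\textbf{Nadir time and value.} Differentiating, $\dot\omega_{\mathrm C}(t)=R\,e^{-\eta t}\bigl(\omega_{\mathrm d}\cos(\omega_{\mathrm d}t-\psi)-\eta\sin(\omega_{\mathrm d}t-\psi)\bigr)$; the first positive zero is the nadir, since the $e^{-\eta t}$ envelope makes later extrema strictly smaller and $t=0$ is excluded because $\dot\omega_{\mathrm C}(0^+)=P^{\mathrm u}/m\neq 0$ by \eqref{eq:initial}. This root satisfies $\omega_{\mathrm d}t^\star-\psi=\arctan(\omega_{\mathrm d}/\eta)$, and the arctangent‑addition identity $\arctan(\eta/\omega_{\mathrm d})+\arctan\!\big((\tau^{-1}-\eta)/\omega_{\mathrm d}\big)=\psi$ collapses this to the stated $\omega_{\mathrm d}t^\star=\phi+\pi/2$. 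Evaluating $\omega_{\mathrm C}$ at $t^\star$, the sinusoidal factor equals $\sin(\arctan(\omega_{\mathrm d}/\eta))=\omega_{\mathrm d}/\omega_n$, so $\omega_{\mathrm C}(t^\star)=A\bigl(1+(R\omega_{\mathrm d}/(A\omega_n))e^{-\eta t^\star}\bigr)$, and a short computation with the residue formulas gives $R^2\omega_{\mathrm d}^2=A^2\bigl(\omega_{\mathrm d}^2+(\omega_n^2\tau-\eta)^2\bigr)=A^2\omega_n^2\,(1+\omega_n^2\tau^2-2\tau\eta)=A^2\omega_n^2\,\tau r^{-1}/m$, i.e.\ $R\omega_{\mathrm d}/(A\omega_n)=\sqrt{\tau r^{-1}/m}$. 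Taking magnitudes yields \eqref{eq:nadir_time}, and $g(m,d)$ is by definition its reciprocal.

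\textbf{Main obstacle.} No step is deep; the difficulty is clerical — carrying the amplitude and phase constants through consistently and recognizing the three identities that collapse the raw expressions into closed form: $1+\omega_n^2\tau^2-2\tau\eta=\tau r^{-1}/m$, the companion $\omega_{\mathrm d}^2+(\tau^{-1}-\eta)^2=r^{-1}/(m\tau)$ (which is exactly what makes the lemma's $\sin\phi$ expression equivalent to $\phi=\arctan((\tau^{-1}-\eta)/\omega_{\mathrm d})$), and the arctangent‑addition step above. Care with the signs of $\tau^{-1}-\eta$ and $\omega_n^2\tau-\eta$, and hence with the relevant arctangent branches and with verifying that the response rises monotonically to its first extremum, is the one place where a careless calculation would go astray; the standing assumption $\omega_n^2>\eta^2$ should also be stated explicitly, the overdamped case requiring a separate and easier argument.
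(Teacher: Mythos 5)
Your derivation is correct: the partial-fraction constants, the arctangent bookkeeping giving $\omega_{\mathrm d}t^\star=\phi+\pi/2$, and the identities $1+\omega_n^2\tau^2-2\tau\eta=\tau r^{-1}/m$ and $\omega_{\mathrm d}^2+(\tau^{-1}-\eta)^2=r^{-1}/(m\tau)$ all check out and reproduce Eq.~\eqref{eq:nadir_time}, including the stated form of $\sin\phi$. The paper itself offers no proof of this lemma — it quotes the result from \cite{Global2020} and only remarks that the nadir is found by setting $\dot\omega_{\mathrm{COI}}(t)=0$ — so your time-domain inverse-Laplace argument is essentially the same (standard, underdamped-case) route as the cited source, with your explicit flagging of the $\omega_n^2>\eta^2$ assumption being a reasonable addition.
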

Direct attempts to constrain the frequency nadir using Eq.\eqref{eq:nadir_time} result in high non-convexity. Instead, we resort to approximating the nonlinear function $g \left( m,d \right)$ as its first-order Taylor expansion at the point $(m_{ \left( 0 \right)},d_{ \left( 0 \right)})$:
\begin{align}
\hat{g}\left( m,d \right) \approx& g \left( m_{ \left( 0 \right)},d_{ \left( 0 \right)} \right) +\left. \nabla _{m}g \right|_{m_{ \left( 0 \right)},d_{ \left( 0 \right)}}\left( m-m_{ \left( 0 \right)} \right)
\nonumber
\\
& +\left. \nabla _{d}g \right|_{m_{ \left( 0 \right)},d_{ \left( 0 \right)}}\left( d-d_{ \left( 0 \right)} \right), \label{eq:emprical_function}
\end{align}
where the approximation error is proportional to $|m-m_{ \left( 0 \right)}|+|d-d_{ \left( 0 \right)}|$ according to Peano's form of the remainder. The reason for the goodness of this linear approximation can be explained via numerical analysis concerning the \textit{Gaussian curvature} $\kappa$ of $g \left( m,d \right)$, which is computed as:
\begin{equation} \label{eq:curvature}
\kappa =\frac{\nabla _{mm}g\cdot \nabla _{dd}g-\nabla _{md}g}{1+\left( \nabla _{m}g \right) ^2+\left( \nabla _{d}g \right) ^2}.
\end{equation}

It can be verified that $|\kappa|$ is generally below $10^{-3}$ almost everywhere when $\tau$ and $r$ vary. It means that $\hat{g}\left( m,d \right)$ is not seriously `distorted' and can be well approximated by a linear plane in \eqref{eq:emprical_function}. It is recommended that $(m_{ \left( 0 \right)},d_{ \left( 0 \right)})$ can be selected within the normal operational ranges, such that the linear approximation error can be further reduced, typically with error $\leqslant 3\%$. Given a threshold for allowed frequency nadir, \(\varepsilon _{\mathrm{nadir}}\), we can express the approximated constraint as:
\begin{equation} \label{eq:nadir_constraint}
\hat{g}\left( m,d \right) \geqslant \varepsilon _{\mathrm{nadir}}^{-1},
\end{equation}
where $\hat{g}\left( m,d \right)$ is the linear approximation given in \eqref{eq:emprical_function}.

\section{Optimization Problem Formulation and Incentive Design}\label{VCG}
\subsection{Cost Function}
The changes in output power $P_{\mathrm{o},i}$ and energy $E_{\mathrm{o},i}$ for generation unit $i$ are as follows:
\begin{align}\label{eq:output}
&P_{\mathrm{o},i} = m_i\dot{\omega}_i+d_i\omega _i,~~E_{\mathrm{o},i}=m_i\omega _i+d_i\int_0^t{\omega _i\left( \tau \right) \,\,\mathrm{d}\tau}.
\end{align}
DERs need to deviate from their maximum power tracking points to reserve the necessary power. This cost, known as the \textit{power reserve cost} $C_{i,\mathrm{p}}$, is related to the maximum reserve power, denoted by $\|P_{\mathrm{o},i}(t)\|_{\infty}$. Additionally, batteries, serving as primary sources for some DERs, must discharge a fraction of their stored energy to support the grid. This cost component, associated with the maximum energy loss $\|E_{\mathrm{o},i}(t)\|_{\infty}$, is \textit{energy reserve cost} $C_{i,\mathrm{e}}$.

Different components of the output power and energy, as described in \eqref{eq:output}, incur variable costs across various types of generation units. The term \(  m_i\dot{\omega}_i(t) \) denotes the inertia buffer power. Wind turbines with rotating mechanics can provide this inertial value \(m_i\) at a lower cost. Other static DERs must reserve the maximum power to deliver \( m_i\dot{\omega}_i(t) \). The energy $m_i\omega_i$ can be charged back automatically after the secondary frequency control returns the system frequency to the nominal value. For the other part $ d_i\int {\omega _i(\tau )\mathrm{d}\tau} $, no existing controllers will increase the frequency above the nominal value to recover this part of energy. Consequently, the costs of output power and energy vary. Taking the power reserve cost, $C_{i,\mathrm{p}}$, as an example, it consists of two distinct components:
\begin{equation}
    \begin{aligned}
   C_{i,\mathrm{p}}=\alpha _{i,\mathrm{p}}\left( m_i\left\| \dot{\omega}_i(t) \right\| _{\infty} \right) ^2+\beta _{i,\mathrm{p}}m_i\left\| \dot{\omega}_i(t) \right\| _{\infty}
    \\
    \,\,   +\xi _{i,\mathrm{p}}\left( d_i\left\| \omega _i(t) \right\| _{\infty} \right) ^2+\zeta _{i,\mathrm{p}} d_i\left\| \omega _i(t) \right\| _{\infty},     
    \end{aligned}
\end{equation}
where $\alpha _{i,\mathrm{p}}$ and $\beta _{i,\mathrm{p}}$ are cost coefficients for the reserve power $m_i\dot{\omega}_i(t)$, and $\xi _{i,\mathrm{p}}$ and $\zeta _{i,\mathrm{p}}$ for $d_i\omega_i(t)$. Quadratic costs are commonly used to reflect that costs may increase with the provision of deloading reserves. For an individual DER, determining its impact on system dynamics can be challenging. System operators can provide DERs with anticipated worst-case values of \(\left\| \omega_i(t) \right\|_{\infty}\) and \(\left\| \dot{\omega}_i(t) \right\|_{\infty}\) in advance. These values are then treated as known parameters by DERs. For instance, one can estimate reserve power by multiplying the system operator’s RoCoF constraint value by the DER’s inertia $m$ or by taking the product of the system operator’s steady-state frequency deviation constraint value and the DER’s damping $d$. We illustrate the estimation methods in detail in the case studies.

Thus, $C_{i,\mathrm{p}}$ becomes a quadratic function of $m_i$ and $d_i$. This approach also applies to the energy reserve cost, $C_{i,\mathrm{e}}$. Combining these considerations, the total cost of the DER $i$ can be expressed as:
\begin{equation}
\begin{aligned}
\mathcal{C} _i(m_i,d_i)&=C_{i,\mathrm{p}} + C_{i,\mathrm{e}}
\\
&=\varrho _{\mathrm{m},i}m_{i}^{2}+\mu _{\mathrm{m},i}m_i+\varrho _{\mathrm{d},i}d_{i}^{2}+\mu _{\mathrm{d},i}d_i,
\end{aligned}
\end{equation}
where $\varrho _{\mathrm{m},i}$ and $\mu _{\mathrm{m},i}$ are equivalent cost coefficients for inertia, and $\varrho _{\mathrm{d},i}$ and $\mu _{\mathrm{d},i}$ for damping.

\subsection{Complete Optimization Problem}
In summary, the complete cost minimization problem of oscillation damping and inertia is formulated as follows:
\begin{equation}\label{eq:final_model}
\begin{aligned}
    &\min~ \sum\nolimits_{i}\mathcal{C} _i(m_i,d_i)
    \\
    &~\mathrm{s.t.}~~\textrm{Small-Signal~Stability~Constraints:~}\eqref{eq:eigenvalue_constraints}
    \\  &~~~~~~~\textrm{Frequency~Stability~Constraints:~}\eqref{eq:COI_rocof},\eqref{eq:nadir_constraint}
    \\
    &~~~~~~~m_i~\textrm{and}~d_i~\textrm{Range~Constraints}.
\end{aligned}	
\end{equation}
The decision variables are all DERs' inertial and damping coefficients, as well as auxiliary variables $v$ in \eqref{eq:plane}. The ranges of $m_i$ and $d_i$ are determined according to the primary source of DERs. Problem \eqref{eq:final_model} is a convex positive semi-definite programming model. Recent grid codes, such as those by National Grid and AEMO, establish guidelines for selecting constants in grid-forming converters, including criteria for test validations of these requirements. This approach can be employed to assist in the network-wide planning for controller parameter tuning in a cost-effective manner.

\subsection{Realistic Considerations}
The standard model in Eq. \eqref{eq:final_model} assumes perfect knowledge of the system parameters and load conditions, which cannot consider varying grid conditions. To address this limitation, we enhance robustness by explicitly incorporating uncertainties in grid conditions, network topology, and system parameters while preserving convexity.

First, load conditions and network parameters are reflected in the power flow Jacobian matrix $\mathbf{L}$. The perturbed power flow Jacobian, denoted as \( \mathbf{L} + \Delta \mathbf{L} \), captures potential deviations due to changes in network topology, parameter estimation errors, or variations in grid conditions, where \( \Delta \mathbf{L} \) belongs to the uncertainty set \( \mathcal{U}_{\mathrm{g}} \). For load variations, $\mathcal{U}_{\mathrm{g}} $ can be defined as the union set of typical load conditions, namely \( \mathcal{U}_{\mathrm{g}}:=  \{\Delta \mathbf{L}|\Delta \mathbf{L}=\mathbf{L}_{m}-\mathbf{L}\}\), where $\mathbf{L}_{m}$ represents the power flow Jacobian matrices corresponding to all predefined load conditions considered by the system operator. For network uncertainty, line parameters can vary by $\eta$\% of their standard values, yielding the uncertainty set \( \mathcal{U} =  [-\eta,\eta]\%\sum_{i,j,i<j}\mathbf{L}_{\mathrm{b}(i,j)}\). Here,
$$
\mathbf{L}_{\mathrm{b}(i,j)} = l_{i,j}\left[
\begin{array}{ccccc}
	\cdots&		\cdots&		\cdots&		\cdots&		\cdots\\
	\cdots&		\displaystyle\overset{i,i}{-1} &	\cdots&		\displaystyle\overset{i,j}{1} &	\cdots\\
	\cdots&		\cdots&		\cdots&		\cdots&		\cdots\\
	\cdots&		\displaystyle\overset{j,i}{1} &	\cdots&		\displaystyle\overset{j,j}{-1} &	\cdots\\
	\cdots&		\cdots&		\cdots&		\cdots&		\cdots\\
\end{array}\right],
$$
where $l_{i,j}$ is the $(i,j)$-th element of $\mathbf{L}$ and $\cdots$ indicates the corresponding  elements equal to 0. This structured perturbation ensures that \( \mathbf{L} + \Delta \mathbf{L} \) remains a valid power flow Jacobian matrix. The robust small-signal stability constraints then become: 
\begin{align}
&\mathbf{D}-2\beta \mathbf{M}\succcurlyeq 0,
\\
&\mathbf{L}+\Delta \mathbf{L}-\beta \mathbf{D}+\beta ^2\mathbf{M}+v\mathbf{1}\mathbf{1}^{\top}\succcurlyeq 0, \forall \Delta \mathbf{L} \in \mathcal{U}_{\mathrm{g}}, \label{eq:robust1}
	\\
&\beta\mathbf{D}-2\left( \cos \zeta \right) ^2(\mathbf{L}+\Delta \mathbf{L})\succcurlyeq 0, \forall \Delta \mathbf{L} \in \mathcal{U}_{\mathrm{g}}.\label{eq:robust2}
\end{align}

Similarly, inertia and damping uncertainties, denoted by $\Delta \mathbf{M} = \mathrm{diag}(\Delta m_i)$ and $\Delta \mathbf{D} = \mathrm{diag}(\Delta d_i)$, respectively, are introduced to account for variability in synchronous machines and loads. The uncertainty sets of inertia and damping are denoted by $\mathcal{U}_{\mathrm{m}}$ and $\mathcal{U}_{\mathrm{d}}$, respectively. 

In summary, the complete robust cost minimization problem is formulated as:
\begin{align}
    \min~ &\sum\nolimits_{i}\mathcal{C} _i(m_i,d_i) \nonumber
    \\
    ~\mathrm{s.t.}~&\textrm{Robust~Small-Signal~Stability~Constraints:~} \nonumber\\
    &~~\mathbf{D}+\Delta \mathbf{D}-2\beta(\mathbf{M}+\Delta \mathbf{M})\succcurlyeq 0, \nonumber\\
    &~~\mathbf{L}+\Delta \mathbf{L}-\beta (\mathbf{D}+\Delta \mathbf{D})+\beta ^2 (\mathbf{M}+\Delta \mathbf{M})+v\mathbf{1}\mathbf{1}^{\top}\succcurlyeq 0, \nonumber\\ 
    &~~\beta(\mathbf{D}+\Delta \mathbf{D})-2\left( \cos \zeta \right) ^2(\mathbf{L}+\Delta \mathbf{L})\succcurlyeq 0, \nonumber\\
    & ~~~~~~~~~~~~\forall~\Delta \mathbf{L} \in \mathcal{U}_{\mathrm{g}}, \Delta \mathbf{M}\in \mathcal{U}_{\mathrm{m}}, \Delta \mathbf{D}\in \mathcal{U}_{\mathrm{d}}. \nonumber\\
    &\textrm{Robust~Frequency~Stability~Constraints:~} \label{eq:robust_model}\\
    &~~\hat{g}\left( m+\Delta m,d+\Delta d \right) \geqslant \varepsilon _{\mathrm{nadir}}^{-1},\\
    & ~~~~~~~~~~~~\forall~\Delta \mathbf{M}\in \mathcal{U}_{\mathrm{m}}, \Delta \mathbf{D}\in \mathcal{U}_{\mathrm{d}}. \nonumber\\		
    &m_i~\textrm{and}~d_i~\textrm{Range~Constraints} \nonumber.
\end{align}	
 
The resulting model retains convexity as a positive semi-definite program, enabling computationally efficient solutions via robust optimization techniques.

Due to technical constraints or other considerations, some DERs may be reluctant to provide inertia and damping. Moreover, system operators may wish to minimize the number of DERs that supply these parameters. In such cases, a sparsity-promoting penalty term can be incorporated into the objective function. Ideally, the \(\mathcal{L}_0\) norm of inertia and damping would be added to account for the number of nonzero parameters. However, because the \(\mathcal{L}_0\) norm is non-convex, a more common practice is to include an \(\mathcal{L}_1\) norm penalty term in the objective function \cite{candes2008enhancing} to encourage the associated inertia or damping values to be zero.

\subsection{Incentive Design}

The increasing penetration of renewable energy has prompted many countries to reform grid regulations and introduce novel control and stabilization services that use economic incentives to encourage various power sources to support system stability. Moreover, the range of ancillary services is expected to continue expanding. For instance, in Finland,where renewable energy comprises 40\% of the energy mix, several new ancillary service markets have been introduced in recent years to maintain system stability, thereby providing a solid foundation for secure grid operation \cite{lieskoski2024review}. In this subsection, we propose an economic incentive mechanism that has the potential to be employed for ancillary services. This model and approach aim to leverage economic incentives to maintain system stability in the future.

The complete incentive design for the joint oscillation damping and inertia provision service is as follows, which can be seen as a new ancillary service market mechanism:
\begin{itemize}[leftmargin=*]
\item \textit{Bidding}: Each DER submits its bids for $m_i$ and $d_i$ as $\mathcal{C}_i\left( m_{i},d_{i} \right)$ to provide $m_i$ and $d_i$.
\item \textit{Clearing}: The system operator collects all costs and determines the amount of $m_i^{\star},$ and $d_i^{\star}$ purchased from each converter by solving the optimization problem \eqref{eq:final_model}.
\item \textit{Payment}: The system operator pays $q_i$ to unit $i$ for providing $m_i$ and $d_i$.
\end{itemize}
In the bidding step, a DER submits parameters $\varrho _{\mathrm{m},i},\mu _{\mathrm{m},i},\varrho _{\mathrm{d},i}$ and $\mu _{\mathrm{d},i}$. In real implementations, the converters can submit a block-shaped, monotonically increasing bid just like bids in the energy market. Upon completion, it is essential for the system operator to validate that the converters are delivering inertia and damping coefficients in accordance with the specified allocations. Failure to meet these specifications will result in nonpayment. Consequently, all DERs can participate in the bidding process that reflects their equivalent damping capabilities, including those provided via PSSs. However, it is important to note that their bids should be limited to their deterministic damping capabilities.

In this paper, the VCG mechanism \cite{xu2015efficient} is employed to determine the payment rules. The payment to unit $i$ is defined as follows:
\begin{equation}
\begin{aligned}
    q_i=\underset{m_j,d_j}{\min}\sum\nolimits_{j\ne i}^{}{\mathcal{C} _j\left( m_j,d_j \right)}-\sum\nolimits_{j\ne i}^{}{\mathcal{C} _j\left( m_{j}^{\star},d_{j}^{\star} \right)}
\end{aligned}
\end{equation}
The first part ${\min\nolimits_{m_j,d_j}}\sum\nolimits_{j\ne i}^{}{\mathcal{C} _j\left( m_j,d_j \right)}$ represents the cost to maintain the stability of the system without the presence of unit $i$. It takes into account the aggregated cost of all other DERs in the system. The second part of the equation represents the summed cost for all other converters with the presence of DER $i$. The payment is determined by the difference between the two parts, representing the reduction in the overall cost achieved by having DER $i$ participate in the service. Namely, each DER gets paid according to the negative externality. For rational generation, the VCG mechanism is proven to satisfy \textit{Cost Minimization}, \textit{Individual Rationality}, and \textit{Incentive Compatible} property~\cite{xu2015efficient}, which can guarantee that the DERs submit their true costs and the cost of providing inertia and damping is minimized.

The effectiveness of the framework above depends on the presence of market-driven incentives or regulatory changes. Meanwhile, the proposed stability constraints are not limited to market applications but are also applicable to various operational optimization and planning challenges, such as determining the construction strategy for future inverter-based power plants by appropriately allocating damping and inertia at each node. This approach can reduce costs while ensuring the stability of power systems with high renewable energy penetration. Moreover, because the constraints are convex with respect to the system topology-related matrix $\mathbf{L}$, system operators can optimize system networks in planning problems using our method. Additionally, it is possible to plan for the required inertia and damping of synchronous machines to meet system stability requirements, rendering synchronous generators' damping and inertia adjustable factors in the planning process.

\section{Case Studies}\label{case}
\begin{figure}[t]
\centering
\includegraphics[width=0.5\textwidth]{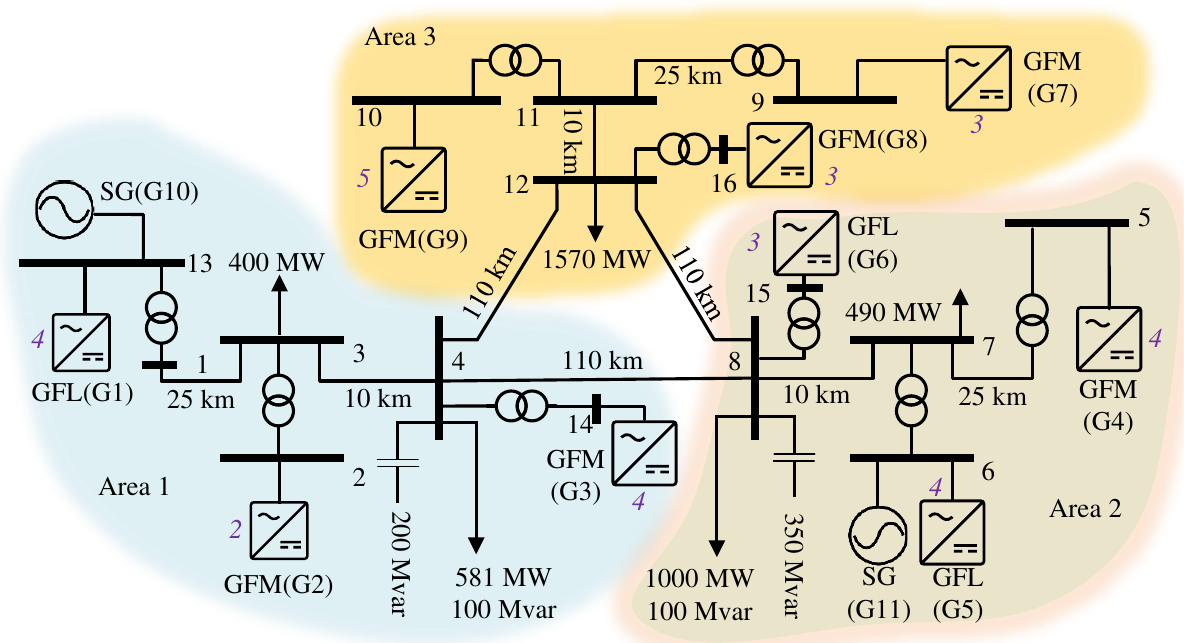}
\caption{System topology for the basic test setup. Each node's number of units is indicated by numbers in Italics adjacent to it.}
\label{fig:case}
\end{figure}
\begin{figure}[t]
\centering
\includegraphics[width=0.35\textwidth]{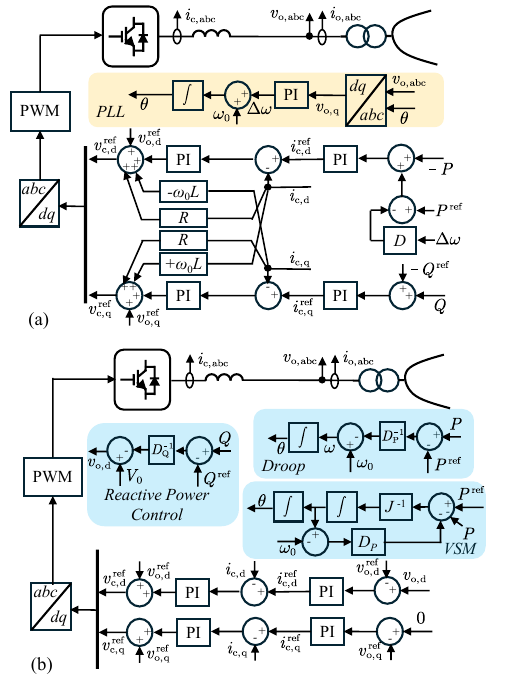}
\caption{(a) The GFL DER controller structure. (b) The GFM DER controller structure.}
\label{fig:loop}
\end{figure}

This study presents a case analysis of a three-area, twelve-bus test system shown in Fig. \ref{fig:case}(a), which directly extends the classical Kundur two-area system widely used in inter-area oscillation studies. An electromagnetic transient Simulink model was employed for the test system. The setup comprises 6 GFM DERs, 3 GFL DERs, and 2 synchronous generators (SGs), with a base capacity of 100 MW for the entire system and for each unit. The synchronous machines are equipped with an IEEE-PSS1A power system stabilizer (PSS) and an ST1A automatic voltage regulator (AVR). Their governors are modeled as droop-controlled turbines with a first-order delay; turbine parameters are set to \( r^{-1}_i=10 \) MW$\cdot\mathrm{s}$/rad and \( \tau_i=5 \) s. Each SG is characterized by a fixed rotational inertia of $1$ s. The damping coefficient of SGs is set as 0. 

For GFL DERs, a synchronous reference frame PLL-based control architecture with an active power droop controller is adopted. For GFM DERs, a VSM synchronization loop is utilized if the corresponding converter is allocated inertia; otherwise, droop control is employed. Detailed configurations of the LR filters and inner-loop controllers are provided in Fig. \ref{fig:loop}(a-b). For brevity, the full model is open-sourced, and all controller and filter parameters are available in the public repository. Additionally, all controller, filter, and system operating point parameters can be found in the online files. \footnote{\href{https://github.com/VictorCFeng/Hybrid-Oscillation-Damping-and-Inertia-Management-for-DERs.git}{https://github.com/VictorCFeng/Hybrid-Oscillation-Damping-and-Inertia-Management-for-DERs.git}}

The oscillation mode requirement is set to $\beta=3$ and $\cos \zeta = 0.1$, according to the grid code's requirement in \cite{Requirements}. The COI RoCoF limit, steady-state frequency deviation, and frequency nadir limits are set as $\varepsilon _{\mathrm{RoCoF}}=1$ Hz, $\varepsilon _{\mathrm{sync}}=0.1$ Hz, $\varepsilon _{\mathrm{nadir}}=0.3$ Hz, respectively. Individual inertia and damping limits are set at $100$ MW$\cdot\mathrm{s}^2$/rad and $100$ MW$\cdot\mathrm{s}$/rad, respectively. The system is subjected to a step load increase of $300$ MW.  The expected electricity price $\mathfrak{p}$ varies from $20$ to $50$ \$/MWh~\cite{price}. We assume multiple converters are located at each node, and the number of converters is shown in Fig.\ref{fig:case}. Each DER is owned by a different manufacturer. For each DER $k$, $\mu _{\mathrm{m},k}=\mathfrak{p}_k\varepsilon _{\mathrm{RoCoF}}$ and $\mu _{\mathrm{d},k}=\mathfrak{p}_k\varepsilon _{\mathrm{nadir}}$ are set, where $\mathfrak{p}_k$ is randomly selected within $[20, 50]$. The quadratic coefficients are set as $\varrho _{\mathrm{m},k}=\mu _{\mathrm{m},k}/50$ and $\varrho _{\mathrm{d},k}=\mu _{\mathrm{d},k}/50$.  If two DERs at the same node submit identical winning bids, they will split the winning bids and receive an equal distribution of the inertia and damping allocation.

The linearized nadir expression \eqref{eq:emprical_function} is expanded at the point $(m_{(0)},d_{(0)})=(200 \mathrm{~MW\cdot s^2/rad},200 \mathrm{~MW\cdot s/rad})$. Numerical results indicate that using $\hat{g} \left( m ,d \right)$ to approximate $g\left( m ,d \right)$ results in a mean error of $1.88\%$ within the range $\left( m,d \right) \in \left[ 10,10^3 \right] \mathrm{~MW\cdot s^2/rad} \times \left[ 10,10^3 \right] \mathrm{~MW\cdot s/rad}$. All experiments are implemented using Matlab and Simulink on a workstation with a 3.70 GHz CPU and 128.0 GB of RAM. The optimization problem is solved using the Mosek solver.

\subsection{Allocation Result}

We first analyze the nodal inertia and damping coefficient allocation results. For comparison, we consider three scenarios:

\begin{itemize}[leftmargin=*]
\item Scenario 1: The proposed standard model incorporating both small-signal and frequency stability constraints is solved using the commercial convex optimization solver Mosek.
\item Scenario 2: The proposed optimization model considers only frequency stability constraints, with these constraints expressed via a Taylor approximation, and is solved using the commercial convex optimization solver Mosek.
\item Scenario 3: The proposed optimization model considers only frequency stability constraints, but the stability constraints are formulated non-convexly. Due to the non-convexity, the genetic algorithm is employed to solve it.
\end{itemize}

Scenario 2 represents the typical approach in current frequency stability and inertia market studies \cite{SajadiKenyon-30,M.X.-29,L.F.-23,Z.R.-24,B.D.-3}, in which only the effects of inertia and damping on frequency deviation and RoCoF are considered. Scenario 3 reflects existing studies employing nonlinear expressions to model the nadir \cite{Global2020,guggilam2018optimizing}, thereby serving as benchmarks for assessing the impact of non-convexity on the solution process.

\begin{figure}[t]
\centering
\includegraphics[width=0.5\textwidth]{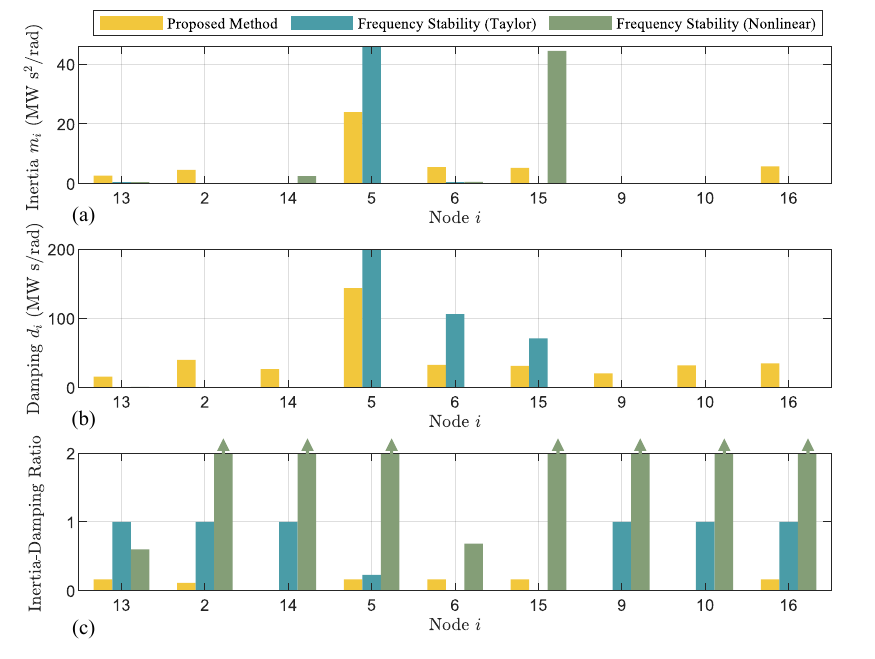}
\caption{(a) The allocation of $m_i$, (b) the allocation of $d_i$ and (c) the inertia-damping ratio of different nodes in 3 different scenarios. In (c), the upper arrow means the ratio is larger than 2.}
\label{fig:placement}
\end{figure}

\begin{figure}[t]
\centering
\includegraphics[width=0.5\textwidth]{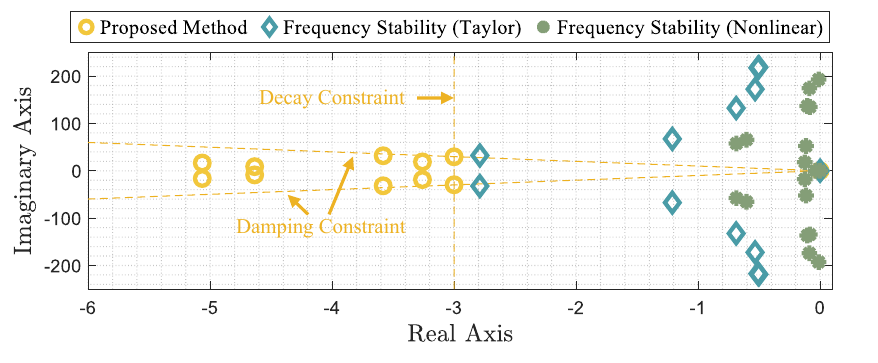}
\caption{The eigenvalue maps of 3 different scenarios. The sufficient constraints presented in Theorem 1 are tight, since all complex eigenvalues are positioned at the boundary.}
\label{fig:eig}
\end{figure}

The nodal distributions of inertia and damping, as well as the inertia-damping ratios at each node, are shown in Fig. \ref{fig:placement}(a-c). In both Scenarios 1 and 2, the total inertia is \(\sum_i m_i = 47.75\) MW·s$^2$/rad and the total damping is \(\sum_i d_i = 378.53\) MW·s/rad. In contrast, Scenario 3 yields a total inertia of \(\sum_i m_i = 48.79\) MW·s$^2$/rad but a total damping of only \(\sum_i d_i = 1.83\) MW·s/rad. This near-zero total damping in Scenario 3 arises from the algorithm’s difficulty in handling non-convex constraints involving inverse trigonometric and exponential functions, resulting in a low-quality solution. These findings underscore the importance of convex constraints in achieving both efficient and reliable solutions. The eigenvalue maps for the three scenarios are depicted in Fig. \ref{fig:eig}. Dashed lines represent the boundaries of the desired eigenvalue region defined in the optimization problem ($\lambda\leqslant\beta$ and $\sin \zeta \cdot \mathrm{Re}\left( \lambda \right) +\cos \zeta \cdot \mathrm{Im}\left( \lambda \right) \leqslant 0$). In Fig. \ref{fig:eig}, Scenario 2 and Scenario 3 both yield pairs of eigenvalues with very poor damping performance. In contrast, Scenario 1's solutions effectively constrain the eigenvalues within the desired region.

Furthermore, it is observed that the proposed convex small-signal stability constraints do not increase the total damping value; rather, they effectively redistribute the damping to satisfy the additional stability constraints while still meeting the frequency stability requirements. This indicates that although positive semi-definite constraints may introduce some conservatism, the overestimation of inertia and damping is minimal. As shown in Fig. \ref{fig:placement}(c), Scenario 1 does not enforce all nodes to exhibit a uniform inertia-damping ratio. For instance, some nodes (Node 14, 9, and 10) are not allocated any inertia. This observation confirms the conclusion from our theoretical derivation that a uniform inertia-damping ratio across all nodes is not required. The nearly uniform inertia-damping ratio observed in Scenario 2 arises because the prices for inertia and damping are set proportionally. Consequently, a node that is economical in providing inertia will also be economical in providing damping. Altering the price settings can lead to non-uniform inertia-damping ratios.

\begin{figure}[t]
\centering
\includegraphics[width=0.5\textwidth]{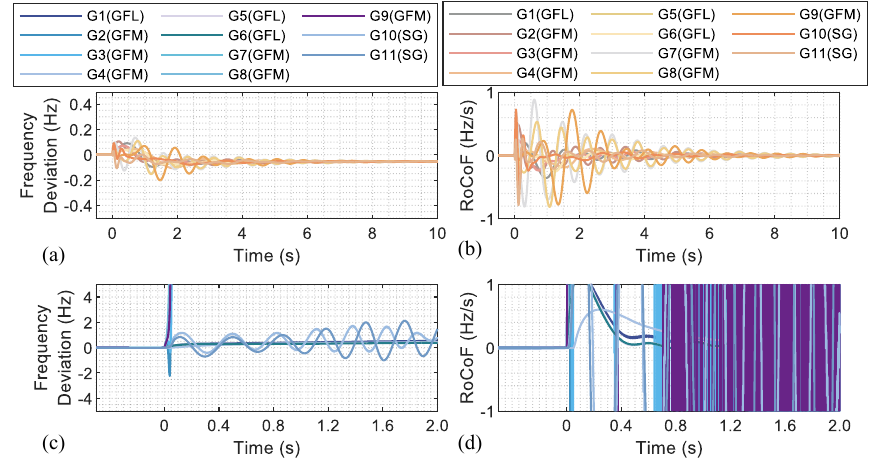}
\caption{ (a) Frequency deviations and (b) RoCoF of the system frequency under the solution provided by Scenario 1 following a 300 MW load disturbance. (c) Frequency deviations and (d) RoCoF after switching the inertia and damping to the solution computed in Scenario 2; the system cannot be stabilized under this operating point.}
\label{fig:normal}
\end{figure}

The time-domain simulations for Scenarios 1 and 2 are shown in Fig. \ref{fig:normal}(a-d). It can be observed that the system is small-signal unstable when we allocate inertia and damping according to the solution obtained in Scenario 2, as shown in Fig. \ref{fig:normal}(c-d). When we switch the inertia and damping of GFLs to the desired values, the system collapses. Some DERs lose synchronization and others begin oscillating. However, with the solution obtained from Scenario 1, the system can maintain small-signal stability at the operating point. Furthermore, with a 300 MW load disturbance, the induced frequency swing remains within the expected limits.

\subsection{Economic Analysis and Reserves}
\begin{figure}[t]
\centering
\includegraphics[width=0.49\textwidth]{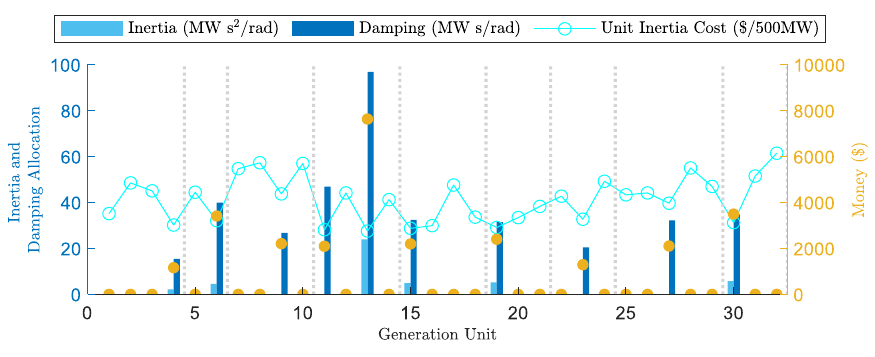}
\caption{The allocations of inertia and damping coefficients, together with the unit inertia prices, and the market payment for all individual DERs.}
\label{fig:allocation}
\end{figure}
Fig. \ref{fig:allocation} presents the allocations of inertia and damping coefficients for each unit, along with the corresponding market payments. The total payment to DERs is approximately $2.8024\times 10^4$ \$, while the aggregate cost is $2.4645\times 10^4$ \$. The higher payment to generators, compared to the cost, is due to the inherent budget imbalance characteristic of the VCG mechanism. Assuming an electricity price of $50$ \$/MWh, this payment corresponds to $550$ MWh of electricity. For comparison, the amount of electricity traded in the energy markets for this system is approximately $4000$ MWh. Establishing this market incurs an additional cost of approximately $14$\%. This market can incentivize DERs' contribution to system stability at an acceptable extra expenditure. 

Regarding DERs' power and energy reserve, we perform simulations by varying the location of the load disturbance and, based on the simulation results, calculate the maximum power rating that each DER must reserve. We also compute the energy change of each DER within 20 seconds after a fault. Scatter plots illustrating the relationship between the maximum power rating and the corresponding \(m\) and \(d\) under three different load cases are provided in Fig. \ref{fig:headroom}(a-c), and the specific reserved power values and energy losses are presented in Fig. \ref{fig:headroom}(d-e). These results indicate that while the exact reserved power varies with the disturbance location, a clear correlation exists between the reserved power and the parameters \(m\) and \(d\).

\begin{figure}[t]
\centering
\includegraphics[width=0.5\textwidth]{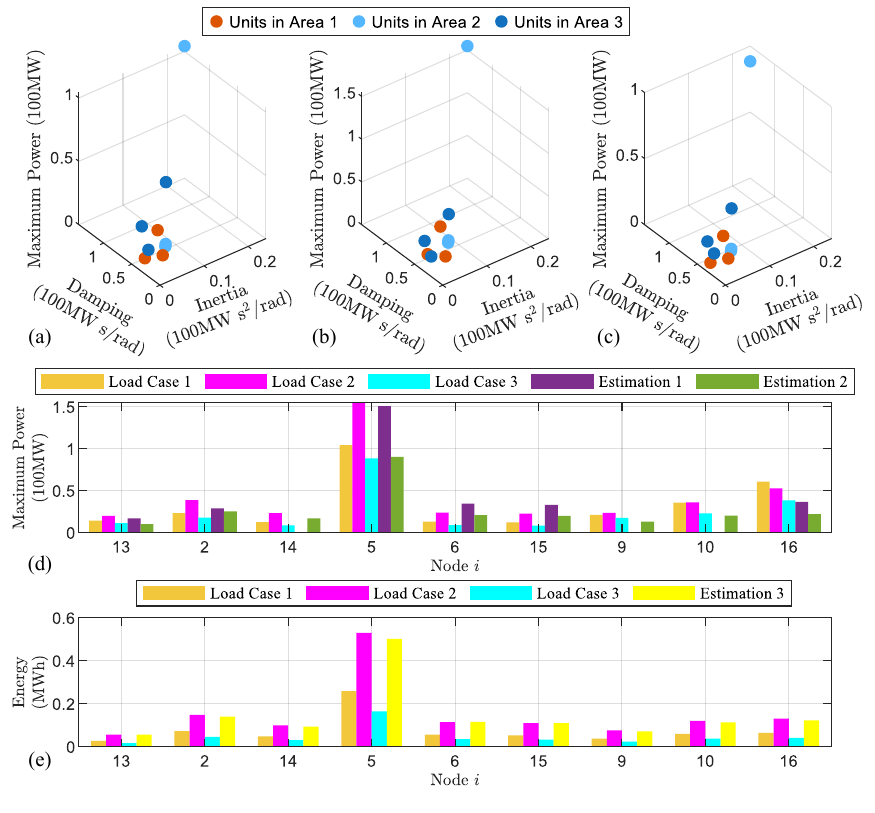}
\caption{(a–c) Scatter plots illustrating the relationship between the reserve DER power rating, nodal damping, and nodal inertia for load disturbances occurring at different locations, denoted as load cases 1–3, respectively. (d) The maximum reserve power rating for three different load cases and two estimation methods. (e) The energy loss for three different load cases and one estimation method.}
\label{fig:headroom}
\end{figure}

Variations in reserve power and energy are influenced by the network’s power redistribution and the differences in the controllers’ dynamics. For estimating reserve power ratings, one can directly use the product of the system operator’s RoCoF constraint value and the DER’s \(m\) (Estimation Method 1), or alternatively, the product of the system operator’s steady-state frequency deviation constraint value and the DER’s \(d\) (Estimation Method 2). For energy loss, one can estimate it by multiplying the system operator’s steady-state frequency deviation constraint value by the DER’s \(d\) and the duration of time (Estimation Method 3). Comparisons between these estimated values and the actual power and energy responses are shown in Fig. \ref{fig:headroom}(d-e). It can be seen that the estimated values generally resemble the actual reserve requirements under many conditions, while they do not fully guarantee robustness. Future research should focus on developing more rigorous methods to accurately and robustly evaluate the required reserve power rating and energy capacity.

\subsection{Load Dynamics' Effects}
We apply Kron reduction to simplify the network and load models, although loads can influence the system response. To investigate the effect of load characteristics, we consider four load scenarios: (1) all loads are constant power loads, (2) the majority of loads are constant current loads, (3) most loads are constant impedance loads, and (4) most loads are nonlinear loads with internal states, with parameters taken from \cite{choi2006measurement}. Detailed simulations of the system response under these load conditions are conducted. Fig. \ref{fig:load}(a–b) presents the frequency and RoCoF responses for the four cases.

The results indicate that different load characteristics indeed affect the system response. However, dynamic loads do not necessarily lead to poorer performance; for example, in the dynamic load scenario, both the oscillations and frequency deviations are smaller than those observed with constant power loads. Meanwhile, constant current loads may cause more serious oscillations. Overall, the frequency deviations and RoCoF approximately remain within acceptable limits, demonstrating that the proposed method is effective across a variety of practical scenarios.

\begin{figure}[t]
\centering
\includegraphics[width=0.5\textwidth]{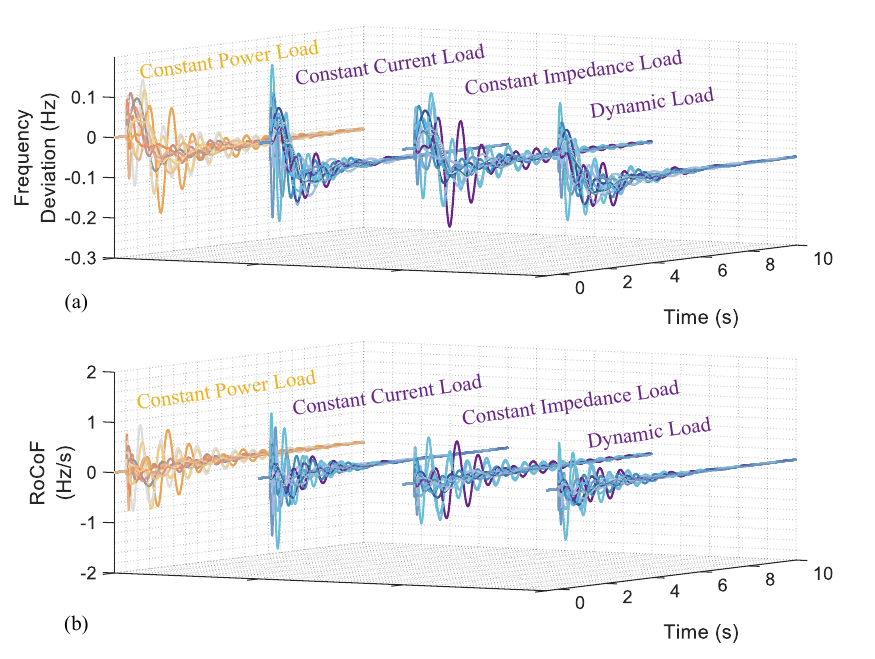}
\caption{Time domain simulations for (a) frequency deviations and (b) RoCoF for constant power loads, constant current loads, constant impedance loads, and dynamic loads simulation scenarios.}
\label{fig:load}
\end{figure}

\subsection{Realistic Consideration Test}
Simulation tests evaluate the robustness requirements. In our robust formulation, we assume that system parameter uncertainties are 10\% of their standard values. Additionally, we impose a damping uncertainty of 50 MW·rad/s for all GFL and SG nodes to prevent potential misestimations of damping values for these devices. We solve both the standard problem \eqref{eq:final_model} and the robust optimization problem \eqref{eq:robust_model}. The resulting allocations of damping and inertia are illustrated in Fig. \ref{fig:robust_allocation}. In both the standard and robust cases, the inertia allocation remains at 47.75 MW·s$^2$/rad; however, the total damping is 378.53 MW·s/rad in the non-robust case and increases to 490.08 MW·s/rad in the robust case. The robust constraints ensure that the system remains stable under various conditions, necessitating a higher damping allocation. Notably, the spatial allocation of damping and inertia also differs under robust conditions. In particular, the damping allocated to the unit at node 5 is significantly reduced, resulting in a more balanced distribution across regions.

\begin{figure}[t]
\centering
\includegraphics[width=0.5\textwidth]{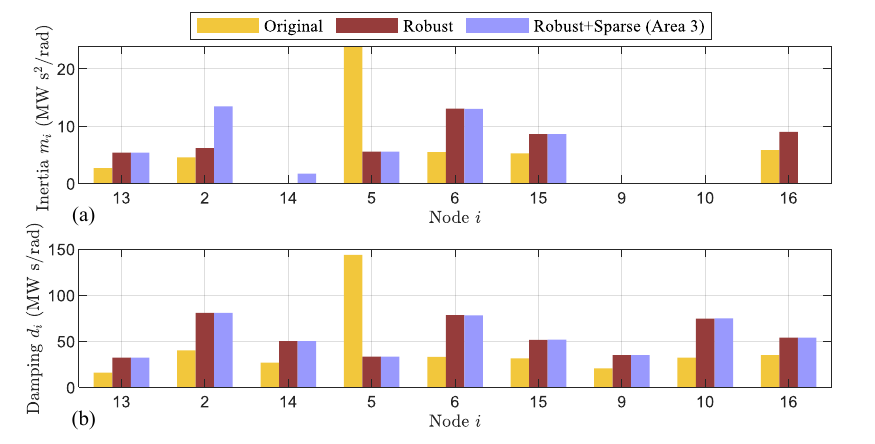}
\caption{The allocation of (a) inertia and (b) damping for standard problem, robust problem, and robust+sparse problem.}
\label{fig:robust_allocation}
\end{figure}

\begin{figure}[t]
\centering
\includegraphics[width=0.5\textwidth]{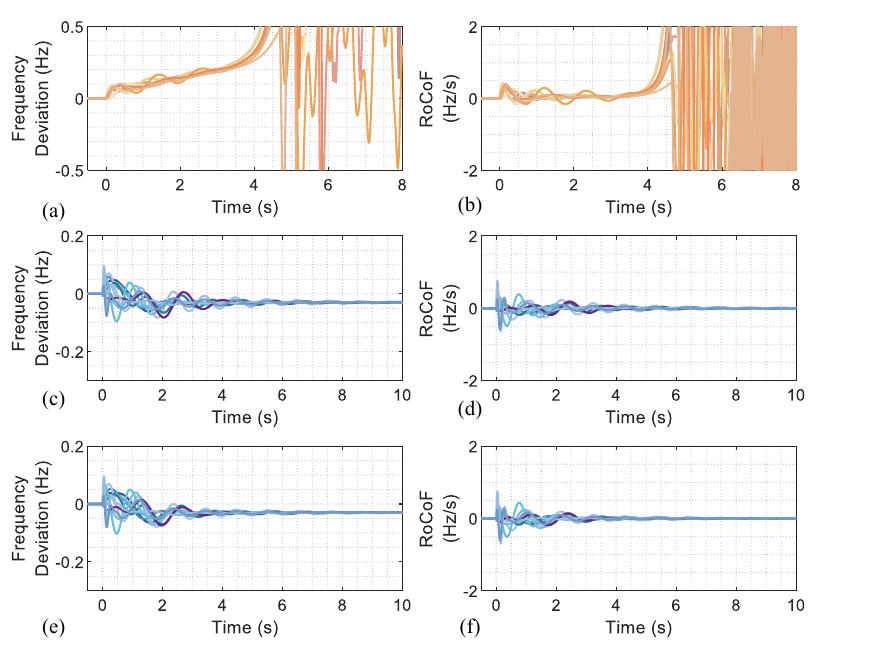}
\caption{Time-domain simulations for frequency deviations and RoCoF are presented for three cases: the standard problem (a–b), the robust problem (c–d), and the robust+sparse problem (e–f), under the condition that the network impedance is increased to 1.1 times its original value. In (a–b), the disturbance is introduced by adjusting the GFM damping to the desired value, and the system cannot be stabilized at the operating point. In (c–f), a 300 MW load disturbance is applied.}
\label{fig:robust}
\end{figure}

Furthermore, we test the solution's robustness by increasing the impedance of all transmission lines to 1.1 times their original values. The simulation results are shown in Fig. \ref{fig:robust}(a-d). With these modified system parameters, the original allocation scheme becomes unstable under the new operating conditions. Specifically, this operating point is unstable: when we set DER damping to the desired values, the frequency deviates from its nominal value, resulting in system instability and severe oscillations. In contrast, the solution obtained from the robust optimization problem enables the system to maintain stable operation with excellent dynamic responses, even when subjected to a 300 MW load disturbance, while keeping the RoCoF minimal.

We also test the effect of sparsity requirements. A sparsity-promoting penalty term is added to the inertia of all DERs in Region 3 (nodes 9, 10, and 16), using a penalty coefficient of $10^4$. The resulting inertia and damping allocations are shown in Fig. \ref{fig:robust_allocation}. The simulated time-domain responses are shown in Fig. \ref{fig:robust}(e-f). After accounting for sparsity, the overall system inertia and damping remain unchanged, albeit at a higher cost. At the same time, the system maintains stability even when network parameters vary. This demonstrates that our model effectively addresses practical factors such as sparsity and robustness. The extra cost associated with promoting sparsity represents an economic trade-off; much like in energy markets where some inexpensive units may be disqualified from participation, the system might incur additional costs.

\subsection{Scalability Test}
To assess the scalability of the proposed method, we select the WECC (263 nodes) \cite{zhang2023dataset}, South Carolina 500 (500 nodes) \cite{birchfield2016grid}, and Texas 2000 (2000 nodes) \cite{birchfield2016grid} test systems as examples of large-scale systems. First, we measure the time required to solve a single optimization problem for each system and find that the solution time is within 3 seconds; detailed data are provided in Tab. \ref{tab:solution_times}. We further validate the effectiveness of the stability constraints. The modal distributions corresponding to the optimized inertia and damping for the three systems are shown in Fig. \ref{fig:large}, demonstrating that the proposed formulation effectively restricts the system mode locations.

\begin{figure}[t]
\centering
\includegraphics[width=0.5\textwidth]{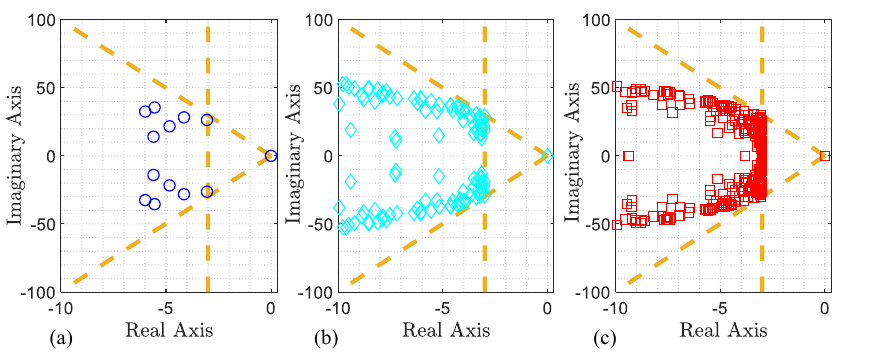}
\caption{The eigenvalue maps near the origin for (a) WECC test system; (b) South Carolina 500 test system; (c) Texas 2000 test system.}
\label{fig:large}
\end{figure}

\begin{table}[t]
\centering
\caption{Solution Times for Different Power Systems}
\resizebox{\columnwidth}{!}{%
\begin{tabular}{ccccc}
    \toprule
    & \textbf{Total Bus} & \textbf{Generator} & \textbf{Single Problem} & \textbf{Market Solution} \\
    \textbf{System} & \textbf{Number} & \textbf{Bus Number} & \textbf{Solution Time} & \textbf{Time} \\
    \midrule
    WECC & 263 & 29 & 0.34 s & 8.10 s \\
    South Carolina 500 & 500 & 90 & 1.15 s & 33.16 s \\
    Texas 2000 & 2000 & 282 & 2.38 s & 219.23 s \\
    \bottomrule
\end{tabular}
}
\label{tab:solution_times}
\end{table}

Regarding the market settlement process, the computational complexity increases linearly with the number of units because it requires computing each unit's contribution to reducing stability costs. By utilizing Matlab's parallel computing capabilities, payment amounts are computed efficiently. Detailed computational times are provided in Tab. \ref{tab:solution_times}, demonstrating that final results are obtained within 4 minutes. These outcomes indicate that the proposed approach is applicable to large-scale systems.

\subsection{Taylor Approximation Error Analysis}
In frequency stability analysis, our model uses a Taylor expansion to approximate the frequency nadir, which may introduce approximation errors relative to the original nadir values. To evaluate this, we conduct an error analysis of the effect of the Taylor approximation. We select two cases: one with fast and strong turbine dynamics (\( r^{-1}_i=5 \) MW$\cdot\mathrm{s}$/rad  and \( \tau_i=10 \) s), and another with slow and weak turbine dynamics (\( r^{-1}_i=10 \) MW$\cdot\mathrm{s}$/rad  and \( \tau_i=2 \) s). Simulations indicate that the COI frequency nadir value matches that computed by the nonlinear function in Ref. \cite{Global2020,guggilam2018optimizing}. Figs. \ref{fig:taylor}(a)(d) show the inverse of the nadir value as provided by the nonlinear nadir function and its Taylor approximation. Figs. \ref{fig:taylor}(b)(e) depict the curvature of the nonlinear function. Figs. \ref{fig:taylor}(c)(f) present the percentage error of the Taylor approximation compared with the original nonlinear function.

\begin{figure}[t]
\centering
\includegraphics[width=0.5\textwidth]{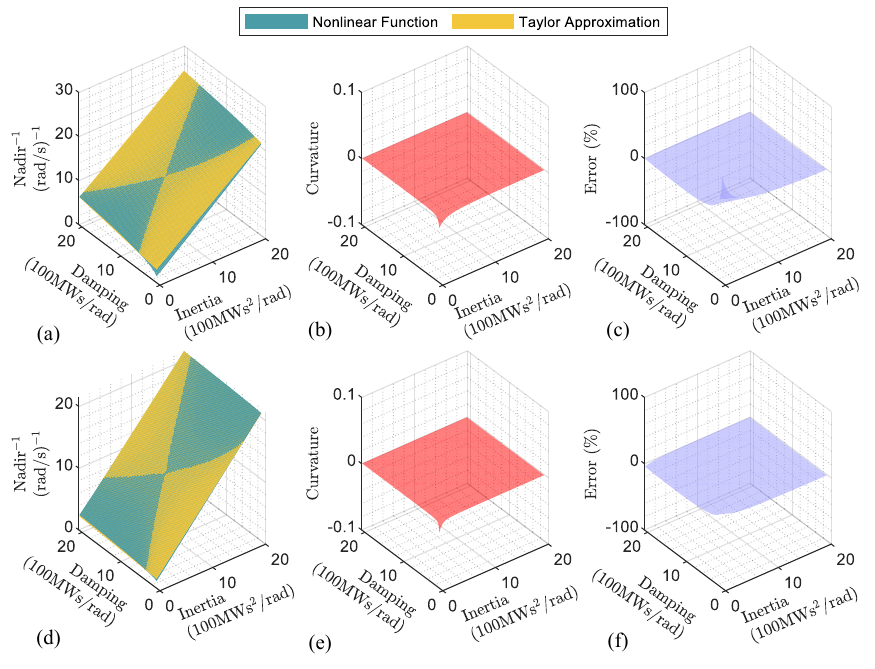}
\caption{(a–c) Inverse nadir value, nonlinear function curvature, and percentage error of the Taylor approximation for the strong and fast turbine case; (d–f) inverse nadir value, nonlinear function curvature, and percentage error of the Taylor approximation for the weak and slow turbine case.}
\label{fig:taylor}
\end{figure}

From the results, over most of the range of damping and inertia, the curvature of the nonlinear nadir function is nearly zero, indicating that it can be approximated by a flat plane. Consequently, in most cases, the Taylor approximation yields excellent results, with errors remaining within 3\%. Only when both inertia and damping are extremely low do nonlinear effects become particularly significant. Such a scenario is highly unlikely in real power systems, as it would imply that the system dynamics have virtually no buffering (almost no inertia) and no capability to resist disturbances (almost no damping). Thus, under typical operating conditions, the Taylor approximation is both accurate and acceptable.

\section{Conclusion}\label{conclusions}
This paper presents a hybrid oscillation damping and inertia management strategy for DERs enabled by economic incentives. The case studies demonstrate that implementing this management framework incurs minimal additional costs while effectively mitigating oscillations and ensuring frequency stability. To extend this research, several promising directions are proposed:

1. Enhanced Dynamic Modeling: Develop a more detailed dynamic model that incorporates diverse multi-loop dynamics of inverters and the behavior of other network components (e.g., fast load dynamics) under high DER penetration \cite{Hatziargyriou2021}. In addition, revising the classical Kron reduction method is important to better capture the dynamic behavior of load buses.

2. Nonlinear System Dynamics: In many cases, nonlinearities significantly influence system oscillation behaviors. Therefore, it is imperative to employ novel nonlinear analysis methods to accurately quantify these effects, identify oscillation sources, and mitigate them \cite{wang2019nonlinear,xu2020real}. For transient stability, DERs may exhibit strong coupling between frequency and voltage control. However, the existence of an energy function for grid-following inverters in a grid-connected (non-single-machine infinite-bus) system remains questionable \cite{zhang2025interaction}.

3. Adaptive Stability Contribution Evaluation: Numerous studies have explored the adaptive adjustment of inertia and damping parameters in real time, demonstrating excellent performance in mitigating oscillations \cite{li2016self,mir2019self,wang2022adaptive}. Evaluating the contribution of these adaptive DERs to overall system stability remains an open challenge. Furthermore, although we have modeled inertia and damping uncertainties using robust optimization techniques, future research should focus on bypassing parameter identification altogether by directly measuring each unit's contribution to system stability.

4. Accurate Reserve Estimation: Precise estimation of the curtailment amount is crucial for operational planning and market settlement. However, the dynamic nature of the power grid complicates accurate estimation. Future work should develop robust estimation techniques that account for grid complexities, including the effects of network topology and disturbance propagation.

\bibliographystyle{IEEEtran}
\bibliography{IEEEabrv,Reference}
\section*{Appendix}
\subsection{Proof for Proposition \ref{th:decay}}
Consider the right-shifting version of the eigenvalue solution of Eq.\eqref{eq:quadratic_eigenvalue}: 
\begin{align}
\,\,\,\,&(\hat{\lambda}-\beta )^2\mathbf{M}\boldsymbol{\nu }+(\hat{\lambda}-\beta )\mathbf{D}\boldsymbol{\nu }+\mathbf{L}\boldsymbol{\nu } \label{eq:shifted_system}
\\
=&\hat{\lambda}^2\mathbf{M}\boldsymbol{\nu }+\hat{\lambda}\left( \mathbf{D}-2\beta \mathbf{M} \right) \boldsymbol{\nu }+\left( \mathbf{L}-\beta \mathbf{D}+\beta ^2\mathbf{M} \right) \boldsymbol{\nu }=\mathbf{0}. \nonumber
\end{align}
where $\hat{\lambda}=\lambda+\beta$ is the solution to the shifted problem Eq.\eqref{eq:shifted_system} when $\lambda$ is the solution to Eq.\eqref{eq:quadratic_eigenvalue}. When all eigenvalues of the shifted system (excluding the eigenvalue $0+\beta$, cf. Lemma \ref{lemma:eig}) lie in the left half-plane, the dominant pole of the original system will reside in the plane $\mathrm{Re}\left( \lambda \right) \leqslant -\beta$.  

We will employ the subsequent definitions and lemmas. The dimension will be represented as \(N\). The orthogonal complement subspace of $\mathbf{1}$ in $\mathbb{R}^N$ is denoted by $\mathbb{I} _{\bot}=\left\{ \boldsymbol{x}\in \mathbb{R} ^N |\mathbf{1}^{\top}\boldsymbol{x}=0 \right\} $.

\begin{lemma}\label{symmetric}
Given a real symmetric matrix \(\mathbf{A}\), the eigenvectors associated with different eigenvalues are orthogonal.
\end{lemma}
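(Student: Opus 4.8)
The plan is to use the self-adjointness of $\mathbf{A}$ together with the standard ``sandwich'' evaluation of a bilinear form. Let $\mu\neq\nu$ be two distinct eigenvalues of $\mathbf{A}$ with associated eigenvectors $\boldsymbol{x}$ and $\boldsymbol{y}$, so that $\mathbf{A}\boldsymbol{x}=\mu\boldsymbol{x}$ and $\mathbf{A}\boldsymbol{y}=\nu\boldsymbol{y}$. Since $\mathbf{A}$ is real symmetric, its eigenvalues are real, so both eigenpairs may be taken in $\mathbb{R}^N$ and no complex conjugation is needed; alternatively one runs the same argument with the Hermitian inner product $\left<\cdot,\cdot\right>$ used elsewhere in the paper and the conclusion is unchanged.

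First I would compute the scalar $\boldsymbol{x}^{\top}\mathbf{A}\boldsymbol{y}$ in two ways. Letting $\mathbf{A}$ act to the right on $\boldsymbol{y}$ gives $\boldsymbol{x}^{\top}\mathbf{A}\boldsymbol{y}=\nu\,\boldsymbol{x}^{\top}\boldsymbol{y}$. Using $\mathbf{A}=\mathbf{A}^{\top}$ to shift the matrix onto the left factor gives $\boldsymbol{x}^{\top}\mathbf{A}\boldsymbol{y}=(\mathbf{A}^{\top}\boldsymbol{x})^{\top}\boldsymbol{y}=(\mathbf{A}\boldsymbol{x})^{\top}\boldsymbol{y}=\mu\,\boldsymbol{x}^{\top}\boldsymbol{y}$.

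Subtracting the two expressions yields $(\mu-\nu)\,\boldsymbol{x}^{\top}\boldsymbol{y}=0$, and since $\mu\neq\nu$ we conclude $\boldsymbol{x}^{\top}\boldsymbol{y}=0$, i.e.\ eigenvectors belonging to distinct eigenvalues are orthogonal, which is the claim.

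There is essentially no obstacle here; the only point that warrants a line of justification is that the eigenvalues of a real symmetric matrix are real, so that a choice of real eigenvectors exists and the transpose manipulation above is legitimate. This is a classical fact that can be cited as standard, or circumvented entirely by phrasing the whole computation with the conjugate-transpose inner product already introduced in the notation section.
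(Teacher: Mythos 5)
Your argument is correct and is exactly the standard textbook proof: evaluating $\boldsymbol{x}^{\top}\mathbf{A}\boldsymbol{y}$ once with $\mathbf{A}$ acting on $\boldsymbol{y}$ and once, via symmetry, on $\boldsymbol{x}$ yields $(\mu-\nu)\,\boldsymbol{x}^{\top}\boldsymbol{y}=0$, hence orthogonality, and your remark that realness of the spectrum of a real symmetric matrix legitimizes the transpose manipulation covers the only subtlety. The paper itself states this lemma as a classical fact without proof, so there is nothing to compare against; your write-up fills that in correctly.
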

\begin{lemma}[Subspace Positive Semi-definiteness] \label{orth}
If $\mathbf{L}-\beta \mathbf{D}+\beta ^2\mathbf{M}+v\mathbf{1}\mathbf{1}^{\top}\succcurlyeq 0
$, then the matrix $\mathbf{L}-\beta \mathbf{D}+\beta ^2\mathbf{M}$ is positive semi-definite on $\mathbb{I} _{\bot}$.
\end{lemma}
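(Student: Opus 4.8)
The plan is to exploit nothing more than the rank-one structure of the perturbation $v\mathbf{1}\mathbf{1}^{\top}$ together with the defining property of $\mathbb{I}_{\bot}$. Write $\mathbf{A}:=\mathbf{L}-\beta \mathbf{D}+\beta ^2\mathbf{M}$, which is symmetric because $\mathbf{L}$ (the Kron-reduced Jacobian), $\mathbf{D}$, and $\mathbf{M}$ are all symmetric; hence "positive semi-definite on $\mathbb{I}_{\bot}$" is well defined. The hypothesis then reads $\mathbf{A}+v\mathbf{1}\mathbf{1}^{\top}\succcurlyeq 0$.

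First I would record that for every $\boldsymbol{x}\in \mathbb{I}_{\bot}$ one has $\mathbf{1}^{\top}\boldsymbol{x}=0$ by the definition of $\mathbb{I}_{\bot}$, so that the quadratic form associated with the rank-one term vanishes: $\boldsymbol{x}^{\top}\!\left(\mathbf{1}\mathbf{1}^{\top}\right)\boldsymbol{x}=\left(\mathbf{1}^{\top}\boldsymbol{x}\right)^2=0$. Second, I would evaluate the hypothesized semidefinite inequality on such an $\boldsymbol{x}$: since $\mathbf{A}+v\mathbf{1}\mathbf{1}^{\top}\succcurlyeq 0$, we get $0\leqslant \boldsymbol{x}^{\top}\!\left(\mathbf{A}+v\mathbf{1}\mathbf{1}^{\top}\right)\boldsymbol{x}=\boldsymbol{x}^{\top}\mathbf{A}\boldsymbol{x}+v\left(\mathbf{1}^{\top}\boldsymbol{x}\right)^2=\boldsymbol{x}^{\top}\mathbf{A}\boldsymbol{x}$. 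As $\boldsymbol{x}\in\mathbb{I}_{\bot}$ was arbitrary, $\boldsymbol{x}^{\top}\mathbf{A}\boldsymbol{x}\geqslant 0$ on all of $\mathbb{I}_{\bot}$, i.e.\ $\mathbf{A}=\mathbf{L}-\beta \mathbf{D}+\beta ^2\mathbf{M}$ is positive semi-definite on $\mathbb{I}_{\bot}$, which is the assertion.

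There is essentially no obstacle in this argument: the only subtlety worth flagging is that the sign of $v$ is irrelevant for this implication, since the offending term $v\left(\mathbf{1}^{\top}\boldsymbol{x}\right)^2$ is identically zero on the test subspace — the positivity of $v$ is used elsewhere (in Proposition~\ref{th:decay}) but not here. If one wishes, the same conclusion can be phrased basis-free: letting $\mathbf{\Pi}$ denote the orthogonal projector onto $\mathbb{I}_{\bot}$, one has $\mathbf{\Pi}\mathbf{1}\mathbf{1}^{\top}\mathbf{\Pi}=\mathbf{0}$, whence $\mathbf{\Pi}\mathbf{A}\mathbf{\Pi}=\mathbf{\Pi}\!\left(\mathbf{A}+v\mathbf{1}\mathbf{1}^{\top}\right)\!\mathbf{\Pi}\succcurlyeq 0$, which is the compressed form of the same statement.
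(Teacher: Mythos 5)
Your proof is correct and follows exactly the same route as the paper's: restrict the quadratic form of $\mathbf{L}-\beta \mathbf{D}+\beta ^2\mathbf{M}+v\mathbf{1}\mathbf{1}^{\top}$ to $\boldsymbol{x}\in\mathbb{I}_{\bot}$ and note that the rank-one term contributes $v\left(\mathbf{1}^{\top}\boldsymbol{x}\right)^2=0$. Your added remarks (irrelevance of the sign of $v$, the projector formulation) are accurate but not needed.
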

\begin{proof}
Because $\mathbf{L}-\beta \mathbf{D}+\beta ^2\mathbf{M}+v\mathbf{1}\mathbf{1}^{\top}\succcurlyeq 0$, we have $ \boldsymbol{x}^{\top}\left( \mathbf{L}-\beta \mathbf{D}+\beta ^2\mathbf{M}+v\mathbf{1}\mathbf{1}^{\top} \right) \boldsymbol{x}\geqslant 0$
for all $\boldsymbol{x}\in\mathbb{I} _{\bot}$. Besides, due to $\mathbf{1}^{\top}\boldsymbol{x}=0$, the equation above suggests $
\boldsymbol{x}^{\top}\left( \mathbf{L}-\beta \mathbf{D}+\beta ^2\mathbf{M} \right) \boldsymbol{x}\geqslant 0 $ for all $\boldsymbol{x}\in\mathbb{I} _{\bot}$.
\end{proof}
\begin{lemma}[Eigenvalues of Subspace Positive Semi-definite Matrix] \label{zero}
If a real symmetric matrix $\mathbf{A}$ is positive definite $\mathbf{A}\succ 0$ on $\mathbb{I} _{\bot}$, then $\mathbf{A}$ possesses at most one eigenvalue $\leqslant 0$.
\end{lemma}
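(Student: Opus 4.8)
The plan is to prove the statement by contradiction via an elementary dimension count. Suppose $\mathbf{A}$ had at least two eigenvalues that are $\leqslant 0$, counted with multiplicity, say $\lambda_1\leqslant\lambda_2\leqslant 0$. Since $\mathbf{A}$ is real symmetric it admits an orthonormal eigenbasis, so I can pick orthonormal eigenvectors $\boldsymbol{u}_1,\boldsymbol{u}_2$ associated with these two eigenvalues and set $V=\mathrm{span}\{\boldsymbol{u}_1,\boldsymbol{u}_2\}$, a two-dimensional subspace of $\mathbb{R}^N$. On $V$ the quadratic form is nonpositive: for $\boldsymbol{x}=a\boldsymbol{u}_1+b\boldsymbol{u}_2$ one gets $\boldsymbol{x}^{\top}\mathbf{A}\boldsymbol{x}=\lambda_1 a^2+\lambda_2 b^2\leqslant 0$ for every $\boldsymbol{x}\in V$.

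Next I would intersect $V$ with $\mathbb{I}_{\bot}$. Since $\dim V=2$, $\dim\mathbb{I}_{\bot}=N-1$, and both are subspaces of $\mathbb{R}^N$, the intersection satisfies $\dim(V\cap\mathbb{I}_{\bot})\geqslant 2+(N-1)-N=1$, so it contains a nonzero vector $\boldsymbol{x}_0$. Because $\boldsymbol{x}_0\in V$ we have $\boldsymbol{x}_0^{\top}\mathbf{A}\boldsymbol{x}_0\leqslant 0$, while because $\boldsymbol{x}_0\in\mathbb{I}_{\bot}\setminus\{\boldsymbol{0}\}$ and $\mathbf{A}\succ 0$ on $\mathbb{I}_{\bot}$ by hypothesis we have $\boldsymbol{x}_0^{\top}\mathbf{A}\boldsymbol{x}_0>0$. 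This contradiction shows $\mathbf{A}$ cannot have two eigenvalues $\leqslant 0$, hence it has at most one.

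An equivalent route, which I would note for robustness, uses the Courant--Fischer min--max characterization: ordering the eigenvalues $\lambda_1\leqslant\cdots\leqslant\lambda_N$, one has $\lambda_2=\max_{\dim S=N-1}\,\min_{\boldsymbol{0}\neq\boldsymbol{x}\in S}\,\boldsymbol{x}^{\top}\mathbf{A}\boldsymbol{x}/\boldsymbol{x}^{\top}\boldsymbol{x}$. Taking $S=\mathbb{I}_{\bot}$, which has exactly dimension $N-1$, lower-bounds $\lambda_2$ by the minimum of the Rayleigh quotient of $\mathbf{A}$ over the unit sphere of $\mathbb{I}_{\bot}$. That sphere is compact and the quotient continuous, so the minimum is attained and, by positive definiteness on $\mathbb{I}_{\bot}$, strictly positive; hence $\lambda_2>0$ and only $\lambda_1$ can be nonpositive.

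There is no serious obstacle here; the argument is routine linear algebra. The only points needing a little care are (i) choosing genuinely orthonormal eigenvectors when $\lambda_1=\lambda_2$, which is legitimate since a symmetric matrix has an orthonormal basis of each eigenspace, and (ii) in the min--max route, invoking compactness of the unit sphere of the \emph{closed} subspace $\mathbb{I}_{\bot}$ to upgrade ``$>0$ pointwise'' to ``$\geqslant\delta>0$'' --- which is precisely what converts positive definiteness on $\mathbb{I}_{\bot}$ into the strict bound $\lambda_2>0$. Either presentation is short; I would favour the contradiction argument since it avoids naming Courant--Fischer and makes the role of the codimension-one subspace transparent.
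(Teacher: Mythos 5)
Your proof is correct and takes essentially the same route as the paper: pick eigenvectors for two non-positive eigenvalues, observe that their two-dimensional span must intersect the $(N-1)$-dimensional subspace $\mathbb{I}_{\bot}$ in a nonzero vector, and contradict positive definiteness of $\mathbf{A}$ on $\mathbb{I}_{\bot}$ via the sign of the quadratic form there. If anything, your use of an orthonormal eigenbasis treats repeated eigenvalues more cleanly than the paper's two-step argument (rank--nullity for ``at most one zero eigenvalue'', then two \emph{distinct} non-positive eigenvalues), and the Courant--Fischer remark is a valid alternative phrasing of the same fact.
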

\begin{proof}
We first prove that there is at most one $0$ eigenvalue:

Suppose a vector $\boldsymbol{x}\in\mathbb{I} _{\bot}$ is the solution to $\mathbf{A}\boldsymbol{x}=\mathbf{0}$, which leads to $\boldsymbol{x}^{\top}\mathbf{A}\boldsymbol{x}=0$, which is a contradiction with $\mathbf{A}\succ 0$ on $\mathbb{I} _{\bot}$. Consequently, all the vectors $\boldsymbol{x}\in\mathbb{I} _{\bot}$ cannot lie in the null space of $\mathbf{A}$. The dimension of the null space for $\mathbf{A}$ must be less than or equal to $1$: $\mathrm{dim}\left( \mathrm{null}\left( \mathbf{A} \right) \right) \leqslant N-\mathrm{dim}\left( \mathbb{I} _{\bot} \right) =1$. By the rank-nullity relationship, we know the rank of $\mathrm{rank}\left( \mathbf{A} \right) =N-\mathrm{dim}\left( \mathrm{null}\left( \mathbf{A} \right) \right) \geqslant N-1$, which indicates there are at least $N-1$ non-zero eigenvalues. 

Next, we prove there cannot be multiple non-positive eigenvalues: By contradiction, suppose there are two different eigenvalues $\sigma _1\leqslant0$ and $\sigma _2\leqslant0$, whose eigenvectors are $\boldsymbol{y}_1$ and $\boldsymbol{y}_2$, respectively. Given that $\mathrm{rank}\left( \mathbf{A} \right) \geqslant N-1$, $\sigma _1$ and $\sigma _2$ cannot be $0$ simultaneously. Since $\mathbf{A}$ is a real symmetric matrix, according to lemma \ref{symmetric}, $\boldsymbol{y}_1\bot \boldsymbol{y}_2$. Because of the linear independence of $\boldsymbol{y}_1$ and $ \boldsymbol{y}_2$, their span has at least a dimension of 2. We have $\mathrm{span}\left\{ \boldsymbol{y}_1,\boldsymbol{y}_2 \right\} \cap \mathbb{I} _{\bot}\ne \varnothing$.

Thereby, there is a nonzero real vector $\boldsymbol{z}\in \mathrm{span}\left\{ \boldsymbol{y}_1,\boldsymbol{y}_2 \right\} $ such that $\boldsymbol{z}\in\mathbb{I} _{\bot}$. Let $\boldsymbol{z}=a\boldsymbol{y}_1+b\boldsymbol{y}_2$ where at least one of $a,b$ is not zero. We have:
\begin{equation}
    \begin{aligned}
        &\boldsymbol{z}^{\top}\mathbf{A}\boldsymbol{z}=\left( a\boldsymbol{y}_1+b\boldsymbol{y}_2 \right) ^{\top}\mathbf{A}\left( a\boldsymbol{y}_1+b\boldsymbol{y}_2 \right) 
        \\
        &=a^2\boldsymbol{y}_{1}^{\top}\mathbf{A}\boldsymbol{y}_1+ab\boldsymbol{y}_{1}^{\top}\mathbf{A}\boldsymbol{y}_2+ab\boldsymbol{y}_{2}^{\top}\mathbf{A}\boldsymbol{y}_1+b^2\boldsymbol{y}_{2}^{\top}\mathbf{A}\boldsymbol{y}_2
        \\
        &=a^2\sigma _1\boldsymbol{y}_{1}^{\top}\boldsymbol{y}_1+ab\sigma _2\boldsymbol{y}_{1}^{\top}\boldsymbol{y}_2+ab\sigma _1\boldsymbol{y}_{2}^{\top}\boldsymbol{y}_1+b^2\sigma _2\boldsymbol{y}_{2}^{\top}\boldsymbol{y}_2
        \\
        &=\sigma _1a^2\left\| \boldsymbol{y}_1 \right\| ^2+\sigma _2b^2\left\| \boldsymbol{y}_2 \right\| ^2\leqslant0.
    \end{aligned}
\end{equation}
which leads to a contradiction to the condition of $\boldsymbol{x}^{\top}\mathbf{A}\boldsymbol{x}> 0$ for all vectors $\boldsymbol{x}\in\mathbb{I} _{\bot}$. 
\end{proof}

Now we are ready to prove the Proposition 1.
\begin{proof}
The shifted eigenvalue solution problem is represented as follows:
\begin{equation}\label{eq:shifted_system1}
    \hat{\lambda}^2\mathbf{M}\boldsymbol{\nu }+\hat{\lambda}\left( \mathbf{D}-2\beta \mathbf{M} \right) \boldsymbol{\nu }+\left( \mathbf{L}-\beta \mathbf{D}+\beta ^2\mathbf{M} \right) \boldsymbol{\nu } = \boldsymbol{0}.
\end{equation}
The quadratic equation about $\hat{\lambda}$ can be expressed as:
\begin{equation} \label{eq:quadratic_equation1}
    \begin{aligned}
        &\hat{\lambda}^2\left< \boldsymbol{\nu },\mathbf{M}\boldsymbol{\nu } \right> +\hat{\lambda}\left< \boldsymbol{\nu },\left( \mathbf{D}-2\beta \mathbf{M} \right) \boldsymbol{\nu } \right>\\
        &+\left< \boldsymbol{\nu },\left( \mathbf{L}-\beta \mathbf{D}+\beta ^2\mathbf{M} \right) \boldsymbol{\nu } \right> =0.
    \end{aligned}
\end{equation}

For complex eigenvalue pairs $\hat{\lambda}$ and its conjugate $\hat{\lambda}^*$, notice the two complex roots of \eqref{eq:quadratic_equation1} satisfy:
\begin{align}
    \hat{\lambda}+\hat{\lambda}^*=\frac{-\boldsymbol{\nu }^{\mathrm{H}}\left( \mathbf{D}-2\beta \mathbf{M} \right) \boldsymbol{\nu }}{\boldsymbol{\nu }^{\mathrm{H}}\mathbf{M}\boldsymbol{\nu}}.
\end{align}
Thereby, if the first inequality of this proposition is satisfied, we have
\begin{equation} 
    \hat{\lambda}+\hat{\lambda}^*=\frac{-\boldsymbol{\nu }^{\mathrm{H}}\left( \mathbf{D}-2\beta \mathbf{M} \right) \boldsymbol{\nu }}{\boldsymbol{\nu }^{\mathrm{H}}\mathbf{M}\boldsymbol{\nu }}\leqslant 0, \label{eq:root_1}
\end{equation}
which further indicates the real part of $\hat{\lambda}$ is less than or equal to zero, and is sufficient to guarantee the complex eigenvalue pairs both have non-positive real parts.

For a real eigenvalue $\hat{\lambda}$ other than eigenvalue $0+\beta$, by contradiction, suppose there is a real eigenvalue larger than zero $0<\hat{\lambda}<\beta $ (it must hold that $\hat{\lambda}< 0+\beta$ because all the eigenvalues in the original system strictly lie in the left half plane). Notice \eqref{eq:shifted_system1} can be written as:
\begin{equation} \label{eq:linear}
    \underset{\mathbf{A}\left( \hat{\lambda} \right)}{\underbrace{\left( \hat{\lambda}^{2}\mathbf{M}+\hat{\lambda}\left( \mathbf{D}-2\beta \mathbf{M} \right) +\left( \mathbf{L}-\beta \mathbf{D}+\beta ^2\mathbf{M} \right) \right) }}\boldsymbol{\nu }=\mathbf{0}.
\end{equation}

Given that \(\hat{\lambda}>0\), the following relations hold: \(\hat{\lambda}^2\mathbf{M}\succ 0\) and 	\(\hat{\lambda}\left( \mathbf{D}-2\beta \mathbf{M} \right) \succcurlyeq 0\), which is valid because \(\mathbf{D}-2\beta \mathbf{M}\succcurlyeq 0\). Based on Lemma \ref{orth}, the matrix \(\mathbf{L}-\beta \mathbf{D}+\beta ^2\mathbf{M}\succcurlyeq 0\) is necessarily positive semi-definite over \(\mathbb{I} _{\bot}\). When considering the summation of these matrices, the matrix \(\mathbf{A}( \hat{\lambda} )\) must be positive definite on \(\mathbb{I} _{\bot}\). Referring to Lemma \ref{zero}, this matrix can have, at most, a single eigenvalue \(\leqslant 0\). The other $N-1$ eigenvalues must be positive. Our subsequent task is to establish that this eigenvalue must be negative rather than zero.

For $\mathbf{A}( \hat{\lambda} )$, we have:
\begin{align}
    \mathbf{A}( \hat{\lambda} ) &=(\hat{\lambda}-\beta )^2\mathbf{M}+(\hat{\lambda}-\beta )\mathbf{D}+\mathbf{L}
    \\
    &\preccurlyeq \frac{(\hat{\lambda}-\beta )^2}{2\beta}\mathbf{D}+(\hat{\lambda}-\beta )\mathbf{D}+\mathbf{L} \label{eq:less}
    \\
    &=\frac{(\hat{\lambda}-\beta )(\hat{\lambda}+\beta )}{2\beta}\mathbf{D}+\mathbf{L} 
    \\
    &\prec \mathbf{L}, \label{eq:mono}
\end{align}
where \eqref{eq:less} is due to $\mathbf{D}-2\beta \mathbf{M}\succcurlyeq 0$; \eqref{eq:mono} is due to $\hat{\lambda}-\beta<0$. The above inequality implies:
\begin{equation}
    \mathbf{1}^{\top}\mathbf{A}( \hat{\lambda} ) \mathbf{1} < \mathbf{1}^{\top}\mathbf{L}\mathbf{1} = 0.
\end{equation}

From this, it's evident that the matrix \(\mathbf{A}( \hat{\lambda} )\) has at least one negative eigenvalue. Consequently, the last eigenvalue must be negative. Given \(\mathbf{A}( \hat{\lambda} )\) possesses one negative eigenvalue and \(N-1\) positive eigenvalues, \(\mathbf{A}( \hat{\lambda} )\) is full rank. The only feasible solution to the equation \(\mathbf{A}( \hat{\lambda} ) \boldsymbol{\nu} = \mathbf{0}\) is \(\boldsymbol{\nu} = \mathbf{A}( \hat{\lambda} )^{-1}\mathbf{0} = \mathbf{0}\). This presents an inconsistency, implying that no real solution exists in the range \(0 < \hat{\lambda} < \beta\) to solve the quadratic eigenvalue problem. 
\end{proof}

\subsection{Proof for Proposition \ref{th:conic}}
\begin{proof}
By multiplying both sides of Eq.\eqref{eq:quadratic_eigenvalue} with the conjugate transpose of eigenvector $\boldsymbol{\nu }^{\mathrm{H}}$, we can transform it into a scalar quadratic equation concerning $\lambda$:
\begin{equation} \label{eq:quadratic_equation}
    \lambda ^2\left< \boldsymbol{\nu },\mathbf{M}\boldsymbol{\nu } \right> +\lambda \left< \boldsymbol{\nu },\mathbf{D}\boldsymbol{\nu } \right> +\left< \boldsymbol{\nu },\mathbf{L}\boldsymbol{\nu } \right> =0.
\end{equation}
For real eigenvalue pairs, inequality \eqref{eq:conic_original} is naturally satisfied because the discriminant of \eqref{eq:quadratic_equation} $\left< \boldsymbol{\nu },\mathbf{D}\boldsymbol{\nu } \right> ^2-4\left< \boldsymbol{\nu },\mathbf{M}\boldsymbol{\nu } \right> \left< \boldsymbol{\nu },\mathbf{L}\boldsymbol{\nu } \right> \geqslant 0$.
For complex conjugate eigenvalue pairs $\lambda$ and $\lambda ^*$, the necessary and sufficient condition for location in $\sin \zeta \cdot \mathrm{Re}\left( \lambda \right) +\cos \zeta \cdot \mathrm{Im}\left( \lambda \right) \leqslant 0$ is:
\begin{equation} \label{eq:prop2}
    \left( \sin \zeta \right) ^2\left( \lambda +\lambda ^* \right) ^2+\left( \cos \zeta \right) ^2\left( \lambda -\lambda ^* \right) ^2\geqslant 0.
\end{equation}
Applying Vieta's formulas to Eq.\eqref{eq:quadratic_equation}:
\begin{align}
    &\left( \lambda +\lambda ^* \right) ^2=\left( \frac{-\left< \boldsymbol{\nu },\mathbf{D}\boldsymbol{\nu } \right>}{\left< \boldsymbol{\nu },\mathbf{M}\boldsymbol{\nu } \right>} \right) ^2,
    \\
    & \left( \lambda -\lambda ^* \right) ^2=\left( \frac{-\left< \boldsymbol{\nu },\mathbf{D}\boldsymbol{\nu } \right>}{\left< \boldsymbol{\nu },\mathbf{M}\boldsymbol{\nu } \right>} \right) ^2-4\frac{\left< \boldsymbol{\nu },\mathbf{L}\boldsymbol{\nu } \right>}{\left< \boldsymbol{\nu },\mathbf{M}\boldsymbol{\nu } \right>}.
\end{align}
Taking them into the \eqref{eq:prop2}, the proof is completed.
\end{proof}

\subsection{Proof for Theorem \ref{th:placement}}
\begin{proof}
From the first inequality of Proposition 1, we have:
\begin{equation}
    \left< \boldsymbol{\nu },\mathbf{D}\boldsymbol{\nu } \right> \geqslant 2\beta \left< \boldsymbol{\nu },\mathbf{M}\boldsymbol{\nu } \right> > 0,
\end{equation}
As a result, we obtain:
\begin{equation}
    \left< \boldsymbol{\nu },\mathbf{D}\boldsymbol{\nu } \right> ^2\geqslant 2\beta \left< \boldsymbol{\nu },\mathbf{M}\boldsymbol{\nu } \right> \left< \boldsymbol{\nu },\mathbf{D}\boldsymbol{\nu } \right> .
\end{equation}
Therefore, in order to make the constraint of Proposition 2 hold, the following condition is sufficient:
\begin{equation} \label{eq:th1-2}
    \begin{aligned}
        2\beta \left< \boldsymbol{\nu },\mathbf{M}\boldsymbol{\nu } \right> \left< \boldsymbol{\nu },\mathbf{D}\boldsymbol{\nu } \right> -4\left( \cos \zeta \right) ^2\left< \boldsymbol{\nu },\mathbf{L}\boldsymbol{\nu } \right> \left< \boldsymbol{\nu },\mathbf{M}\boldsymbol{\nu } \right> \geqslant 0.
    \end{aligned}
\end{equation}
Because $\left< \boldsymbol{\nu },\mathbf{M}\boldsymbol{\nu } \right>>0$, $\beta\mathbf{D}-2\left( \cos \zeta \right) ^2\mathbf{L}\succcurlyeq 0$ guarantees \eqref{eq:th1-2} and the proof is completed.
\end{proof}

\begin{IEEEbiography}[{\includegraphics[width=1in,height=1.25in,clip,keepaspectratio]{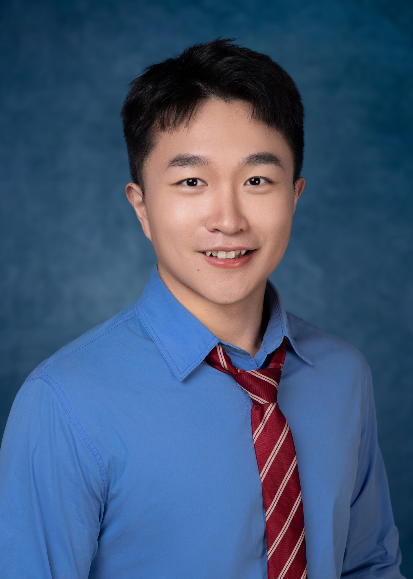}}]{Cheng Feng} (Member, IEEE) received the B.S. degree in Electrical Engineering from Huazhong University of Science and Technology in June 2019, and the Ph.D. degree in Electrical Engineering from Tsinghua University in June 2024. During February 2023 to August 2023, he was a visiting scholar in the Automatic Control Laboratory, ETH Zürich. He is now the Ezra Postdoctoral Associate at Cornell University. His research interests include virtual power plants, power system flexibility, and stability.
\end{IEEEbiography}

\begin{IEEEbiography}[{\includegraphics[width=1in,height=1.25in,clip,keepaspectratio]{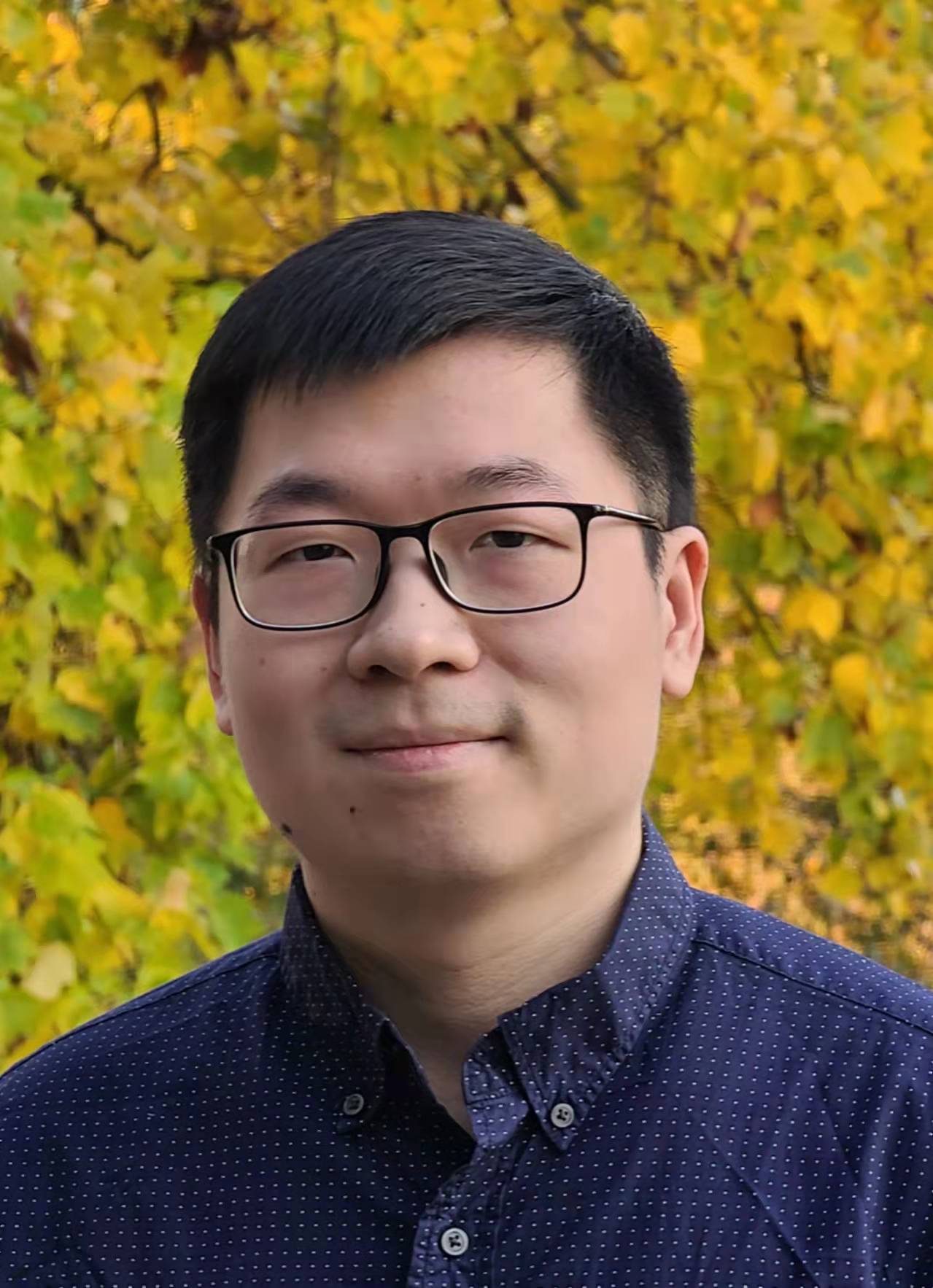}}] {Linbin Huang} (Member, IEEE) received the B.Eng. and Ph.D. degrees from Zhejiang University, Hangzhou, China, in 2015 and 2020, respectively. Currently, he is a ZJU100 Young Professor at Zhejiang University, Hangzhou, China. From 2020 to 2024, he was with ETH Zürich, Switzerland, where he was a Postdoctoral Researcher and then promoted to Senior Scientist. His research interests include power system stability, optimal control of power electronics, and data-driven control.
\end{IEEEbiography}

\begin{IEEEbiography}[{\includegraphics[width=1in,height=1.25in,clip,keepaspectratio]{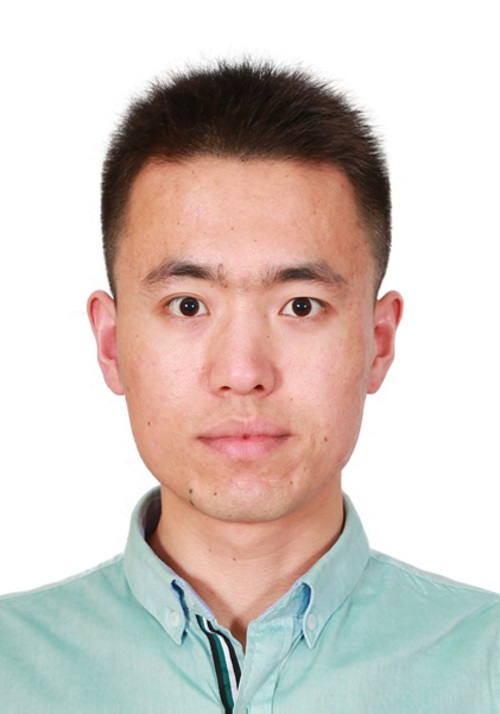}}] {Xiuqiang He} (Member, IEEE) received his B.S. degree and Ph.D. degree in control science and engineering from Tsinghua University, China, in 2016 and 2021, respectively. Since 2021, he has been a Postdoctoral Researcher and was later promoted to a Senior Scientist at ETH Zürich, Switzerland. His current research interests include power system dynamics, stability, and control. Dr. He was the recipient of the Beijing Outstanding Graduates Award and the Outstanding Doctoral Dissertation Award from Tsinghua University.
\end{IEEEbiography}

\begin{IEEEbiography}[{\includegraphics[width=1in,height=1.264in,clip,keepaspectratio]{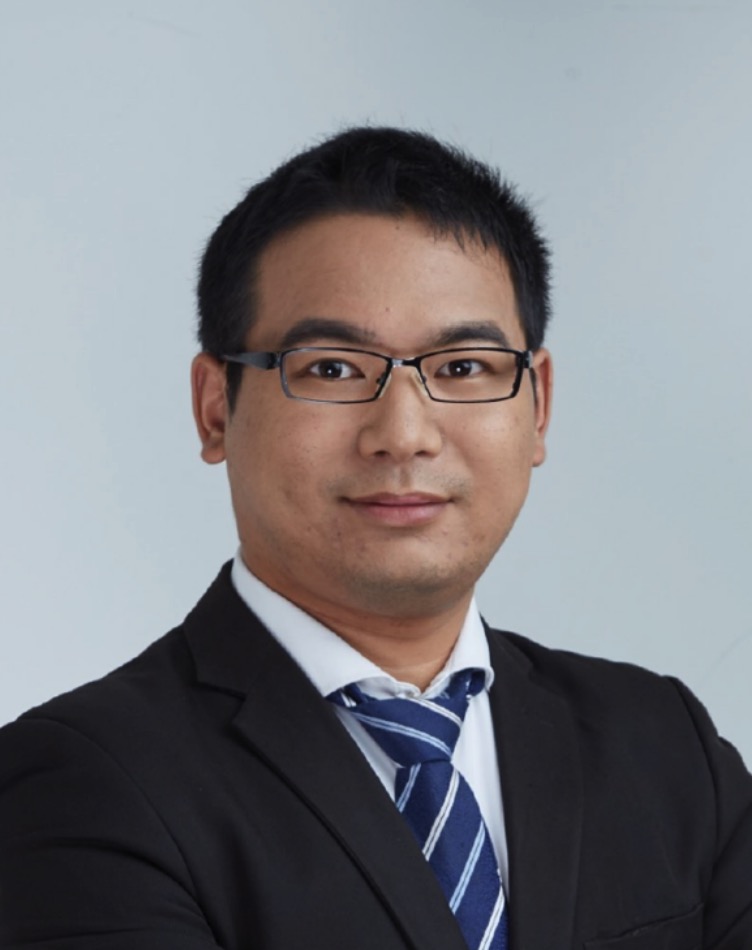}}]{Yi Wang} (Member, IEEE) received the B.S. degree from Huazhong University of Science and Technology in June 2014, and the Ph.D. degree from Tsinghua University in January 2019. He was a visiting student at the University of Washington from March 2017 to April 2018. He served as a Postdoctoral Researcher in the Power Systems Laboratory at ETH Zurich from February 2019 to August 2021. He is currently an Assistant Professor with the Department of Electrical and Electronic Engineering, The University of Hong Kong. His research interests include data analytics in smart grids, energy forecasting, multi-energy systems, Internet of Things, and cyber-physical-social energy systems.
\end{IEEEbiography}

\begin{IEEEbiography}[{\includegraphics[width=1in,height=1.264in,clip,keepaspectratio]{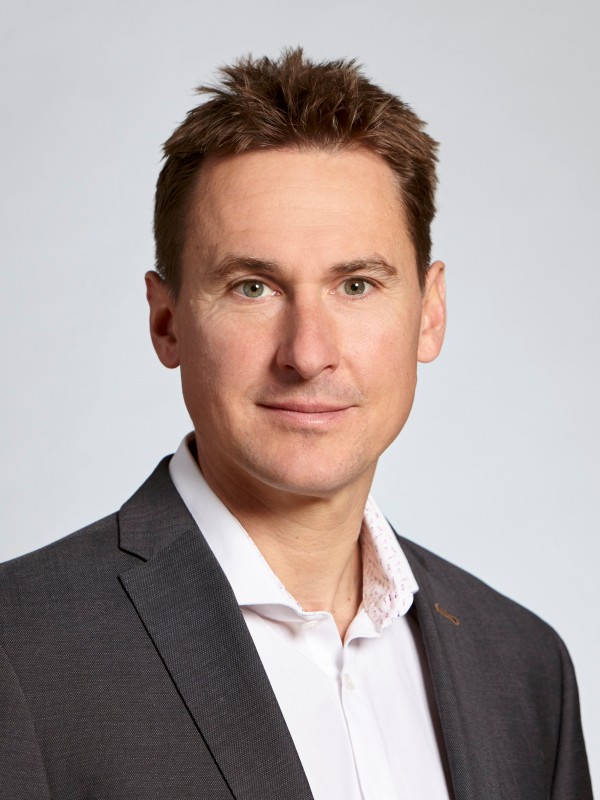}}]{Florian Dörfler} (Senior Member, IEEE) received the diploma degree in engineering cybernetics from the University of Stuttgart, Stuttgart, Germany, in 2008, and the Ph.D. degree in mechanical engineering from the University of California, Santa Barbara, CA, USA, in 2013. He is currently a Full Professor with the Automatic Control Laboratory, ETH Zürich, Zürich, Switzerland. From 2013 to 2014, he was an Assistant Professor with the University of California, Los Angeles, CA, USA. From 2021 to 2022, he has been the Associate Head with the Department of Information Technology and Electrical Engineering, ETH Zürich. His research interests include control, optimization, and system theory with applications in network systems, in particular, electric power grids. 
\end{IEEEbiography}

\begin{IEEEbiography}[{\includegraphics[width=1in,height=1.264in,clip,keepaspectratio]{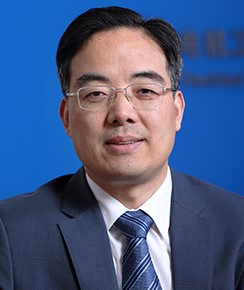}}]{Chongqing Kang} (Fellow, IEEE) received the Ph.D. degree from the Department of Electrical Engineering, Tsinghua University, Beijing, China, in 1997, where he is currently a Professor. His research interests include power system planning, power system operation, renewable energy, low-carbon electricity technology, and load forecasting.
\end{IEEEbiography}

\ifCLASSOPTIONcaptionsoff
\newpage
\fi
\end{document}